\theoremstyle{plain}
\newtheorem{theorem}{Theorem}[section]
\newtheorem{proposition}[theorem]{Proposition}
\newtheorem{corollary}[theorem]{Corollary}
\newtheorem{lemma}[theorem]{Lemma}
\theoremstyle{definition}
\newtheorem{definition}[theorem]{Definition}
\newtheorem{notation}[theorem]{Notation}
\newtheorem{idea}[theorem]{Idea}
\theoremstyle{remark}
\newtheorem{remark}[theorem]{Remark}
\let\c@table\c@figure
\newcommand{\R}{\mathbb{R}}
\newcommand{\E}{\mathds{E}}
\newcommand{\F}{\mathcal{F}}
\title{Unifying the theory of storage and the risk premium by an unobservable intrinsic electricity price}
\author[1,2]{W.J. Hinderks\thanks{Corresponding author: \url{wieger.hinderks@itwm.fraunhofer.de}}}
\author[1,2]{R. Korn}
\author[1,3]{A. Wagner}
\affil[1]{\small TU Kaiserslautern, Erwin-Schrödinger-Straße 1, 67663 Kaiserslautern, Germany}
\affil[2]{\small Fraunhofer ITWM, Fraunhofer-Platz 1, 67663 Kaiserslautern, Germany}
\affil[3]{\small Karlsruhe University of Applied Sciences, Faculty of Management Science and Engineering, Moltkestraße 30, 76133 Karlsruhe, Germany}
\date{\today}
\begin{document}

\maketitle

\begin{abstract}
\noindent In this paper we introduce a new concept for modelling electricity prices through the introduction of an unobservable intrinsic electricity price~$p(\tau)$. We use it to connect the classical theory of storage with the concept of a risk premium.
We derive prices for all common contracts such as the intraday spot price, the day-ahead spot price, and futures prices. Finally, we propose an explicit model from the class of structural models and conduct an empirical analysis, where we find an overall negative risk premium.

\vspace{1em}

\noindent {\bfseries Keywords:} intrinsic electricity price, theory of storage, risk premium, risk-neutral measure, real-world measure, Esscher transform
\end{abstract}


\section{Introduction} \label{section:Introduction}
Electricity is different than other underlyings of financial contracts: it is not storable. This means that electrical energy at one time point cannot be transferred to another. As a consequence power contracts with disjoint delivery time spans basically have a different underlying \citep{Hinz2005}. Of course, their prices are not necessarily uncorrelated since the price driving processes of electricity production are (auto)correlated.

Because of this non-storability of electricity the relation between spot and forward contracts is not obvious. In the literature several theories have been proposed to explain the relation between spot and forward prices for commodities. The two main theories are the \emph{theory of storage} and the concept of a \emph{risk premium}, both of which we discuss in Section~\ref{section:LiteratureReviewOnForwardPricing}. With this unclear relation between spot and forward prices also comes a lack of knowledge on what the risk-neutral measure~$Q$ should be for electricity markets. This paper uses the concept of the actual \emph{intrinsic price} of electricity, which connects the theory of storage and the concept of a risk premium.

In this paper we 
\begin{itemize}[nosep]
\item introduce a new approach to modelling electricity prices,
\item show how this approach is related to existing modelling approaches such as the Heath-Jarrow-Morton~(HJM) approach,
\item investigate the relation between the real-world measure~$P$ and the risk-neutral measure~$Q$,
\item connect our theory to the theory of storage and the concept of a risk premium,
\item and apply this theory to market data.
\end{itemize}
Section~\ref{section:LiteratureReviewOnForwardPricing} is concerned with a literature review of both main theories on forward pricing and introduces the general idea of the intrinsic price modelling approach. The mathematical theory of the intrinsic electricity price is introduced in Section~\ref{section:TheoryOfIntrinsicPrice}, whereas Section~\ref{section:ModelChoiceAndEmpiricalResults} assumes an explicit model and applies it to real data. We will see there that the risk premium is in general negative, which is in accordance with the findings of \citet{BenthCartea2008}. With this concept we connected the construction of forward curves such as given by \citet{Caldana2017} and the HJM approaches such as given by \citet{Kiesel2009,Hinz2005,Hinderks2018}.

\section{Literature review} \label{section:LiteratureReviewOnForwardPricing}
If we consider electricity delivered during a period~$\tau$, we can trade in electricity contracts for this delivery time on four markets:
\begin{itemize}[nosep]
\item the intraday spot market,
\item the day-ahead spot market,
\item the futures market,
\item and the market for options (on futures).
\end{itemize}
This market setting is summarised in Figure~\ref{fig:overviewofmeasurechanges}. The intraday market is the last market to open and is traded in (approximately) the last 24 hours before delivery. The day-ahead market is an auction, which is held one day before delivery. On the futures market, futures on the day-ahead spot price are traded up to several years before delivery. On the options market regular European call and put options on the futures contracts are available.

Figure~\ref{fig:overviewofmeasurechanges} also illustrates the probability measures usually connected to each market. Here, we denote that usually the day-ahead spot market is modelled under the \emph{real-world} measure $P$ and that derivatives' prices are computed through conditional expectation under the \emph{risk-neutral} measure~$Q$. Since intraday spot markets have only been gaining proper liquidity fairly recently, literature on stochastic modelling of intraday prices has not matured yet and the interdependence of the intraday and day-ahead spot markets is not clear.

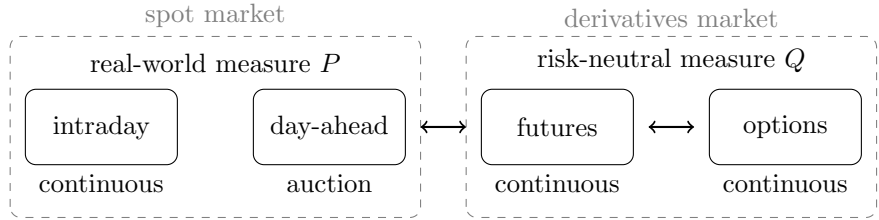
\begin{figure}[t]
\centering
\begin{tikzpicture}
\draw[rounded corners] (-7, -0.5) rectangle (-5, 0.5) {};
\node[] at (-6, 0) {intraday};
\node[below] at (-6, -0.5) {continuous};

\node[above] at (-4.5, 0.6) {real-world measure $P$};

\draw[rounded corners] (-4, -0.5) rectangle (-2, 0.5) {};
\node[] at (-3, 0) {day-ahead};

\node[below] at (-3, -0.5) {auction};

\draw[thick,<->] (-1.8,0) -- (-1.2,0);

\draw[rounded corners] (-1, -0.5) rectangle (1, 0.5) {};
\node[] at (0, 0) {futures};
\node[below] at (0, -0.5) {continuous};

\node[above] at (1.5, 0.6) {risk-neutral measure $Q$};
\draw[thick,<->] (1.8,0) -- (1.2,0);

\draw[rounded corners] (2, -0.5) rectangle (4, 0.5) {};
\node[] at (3, 0) {options};
\node[below] at (3, -0.5) {continuous};

\draw[rounded corners,dashed,gray] (-7.2, -1.2) rectangle (-1.8, 1.2) {};
\node[above,gray] at (-4.5, 1.2) {spot market};

\draw[rounded corners,dashed,gray] (-1.2, -1.2) rectangle (4.2, 1.2) {};

\node[above,gray] at (1.5, 1.2) {derivatives market};

\end{tikzpicture}
\caption{Electricity markets and the relevant probability measures. Day-ahead contracts are the underlying for the futures. The futures in turn are the underlying for the options. These derivative relations are denoted with the double-headed arrows.}
\label{fig:overviewofmeasurechanges}
\end{figure}

In the rest of this section we will write~$S(\tau)$ for the day-ahead spot price delivering 1 MW from $\tau$ to $\tau + 1$ hour and denote the price at time~$t$ of a forward on $S(\tau)$ by $f_t(\tau)$. The relation between $P$ and $Q$ -- or in other words, the relation between the spot and futures markets -- is not straightforward, since electricity is not one-dimensional in time as discussed in the \nameref{section:Introduction}. In the literature two main methods for pricing forward contracts can be found: the theory of storage and the concept of a risk premium, which we detail in the next two sections. Furthermore, we introduce a new idea using the notion of an unobservable intrinsic electricity price to model the relation between spot and forward markets.

\subsection{Theory of storage}
The theory of storage -- as its name suggests -- is based on the fact that one can buy the underlying for a forward now and sell it later \citep{Hull2000,Fama1987}. 

At time~$t$ a storable commodity can be bought at the spot market for a price~$S(t)$ and it can be held until delivery time~$\tau > t$. Comparing this strategy to that of entering a forward contract at time~$t$, which delivers the commodity at time~$\tau$, it is easy to see that the forward price should equal
\[
f_t(\tau) = e^{(r + u - y) (\tau-t)} \, S(t),
\]
where $r$ is the interest rate, $u$ corresponds to the storage costs and $y$ to the convenience yield.\footnote{The convenience yield is the implied gain of physically holding a consumption asset.} 

As said in the Introduction electricity cannot be stored and held like regular commodities such as gold. Therefore, this approach, which is based on a buy-and-hold replication strategy, cannot be used for electricity prices.

\subsection{Risk premium}
As discussed by \citet{Fama1987} there is another line in pricing commodity forwards, which introduces the concept the so-called risk premium. The risk premium at time~$t$ for delivery time~$\tau$ is defined as the difference
\begin{equation} \label{eq:DefinitionRiskPremium}
\pi_t(\tau) :=  f_t(\tau) - \E_P[S(\tau) \, | \, \F_t].
\end{equation}
The motivation behind this premium is that the difference between the futures price and the current spot price should equal the risk premium~$\pi_t(\tau)$ plus the expected difference of the future and current spot price, i.e.
\[
f_t(\tau) - S(t) = \pi_t(\tau) + \E_P[S(\tau) - S(t) \, | \, \F_t].
\]
Rewriting this yields Equation~\eqref{eq:DefinitionRiskPremium}. A common approach in electricity modelling is to assume\footnote{Or derive an equivalent measure~$Q$ from the spot price model under $P$.} that there is an equivalent measure~$Q$ such that
\[
f_t(\tau) := \E_Q[S(\tau) \, | \, \F_t],
\]
see \citet{BenthCartea2008}, for example. The risk premium then becomes
\begin{align}
\nonumber \pi_t(\tau) &=  \E_Q[S(\tau) \, | \, \F_t] -  \E_P[S(\tau) \, | \, \F_t] \\
&=  \E_P\left[ \left( \tfrac{\nu_\tau}{\nu_t} - 1 \right) S(\tau) \, | \, \F_t\right], \label{eq:RiskPremiumThroughMeasureChange}
\end{align}
where $\nu_t = \frac{dQ}{dP} \large|_{\F_t}$ is the Radon-Nikodym derivative.

\begin{remark}[Martingale property]
Usually, when we speak of \emph{the} risk-neutral measure we mean the unique equivalent measure~$Q$ such that all discounted tradable assets are martingales, i.e.
\[
e^{-r t} S(t) \overset{!}{=} \E_Q[e^{-r \tau } S(\tau) \, | \, \F_t] .
\]
However, since $S(t)$ and $S(\tau)$ basically have different underlying commodities and $S(\tau)$ is not traded at time~$t$, \citet{Benth2008} argue that this relation should not hold for \emph{a} risk-neutral measure in the electricity markets. This allows any equivalent measure to be called a pricing or risk-neutral measure.
\end{remark}

There exist several studies investigating the risk premium for electricity contracts, e.g. \citet{Redl2012,BenthCartea2008,Benth2009,Lucia2011,Viehmann2011}. However, it is hard to investigate the risk premium in the case of electricity since $S(t)$ and $S(\tau)$ basically have different underlying commodities. The method conducted by \citet[Equations (6) and (7)]{Fama1987} on a variety of different storable commodities is therefore not applicable in the electricity setting.

\citet{Redl2012,Viehmann2011} concentrate on the risk premium in the German market. They view the so-called \emph{ex post} premium, expressed as
\begin{align*}
f_t(\tau) - S(\tau) &= \left( f_t(\tau) - \E_P[S(\tau) \, | \, \F_t] \right) - \left( S(\tau) -\E_P[S(\tau) \, | \, \F_t] \right) \\
&=: \pi_t(\tau) - \varepsilon_t(\tau),
\end{align*}
where $\varepsilon_t(\tau) \in \F_\tau$ is a random variable with $P$-expectation equal to zero. Both studies find that the risk premium is positive in mean. However,their analysis is conducted by comparing futures prices with the realized spot prices and, therefore, the error terms~$\varepsilon_t(\tau)$ are assumed to be independent, which they might not be. In this case the result does not tell us anything about the risk premium, but about the average risk premium plus error term.

\citet{BenthCartea2008} define an arithmetic multi-factor model for the spot price~$S(t)$ and define a measure change from~$P$ to $Q$ with the Esscher transform to price futures contracts. They derive Equation~\eqref{eq:RiskPremiumThroughMeasureChange} in their setting and apply their model to German market as well. However, they find that the majority of the contracts has a negative risk premium. This contradicts the findings of \citet{Redl2012,Viehmann2011}.

In recent work a zero risk premium, i.e. $P = Q$, has been discussed for certain purposes such as constructing a PFC or forecasting prices \citep{Caldana2017,Steinert2018}. Other studies do not consider a pricing measure at all and thus compute all derivatives' prices through conditional expectation under the real-world measure \citep{Lyle2009}.

In light of the above discussion we find a modelling approach that just introduces the risk premium to capture the difference between spot and futures prices not completely satisfying. This method cannot answer all the questions raised by its introduction and it is extremely hard -- if not, impossible -- to verify its existence through empirical studies in the case of electricity prices, which is indicated by the contradictory evidence of the discussed studies.

\subsection{An unobservable intrinsic price}
In this section we introduce a new perspective: all power contracts deliver electrical energy during a certain delivery period. Surely, when looking at the system as a whole, this energy must have a true price, which is unobservable and intrinsic for that delivery period. What if we model this intrinsic electricity price instead of every market separately?

As a consequence we stop using the modelling approach displayed in Figure~\ref{fig:overviewofmeasurechanges}, i.e. a system where we model each market by its own price and try to connect two markets by a measure change. Instead we assume that there is an unobservable intrinsic electricity price modelled under a fixed risk-neutral~$Q$ and assume all tradable electricity contracts to be derivatives of this intrinsic electricity price. Figure~\ref{fig:overviewunobservableintrinsicprice} illustrates this approach.

In this approach we assume that all tradable contracts have dynamics under the real-world measure~$P$. Therefore it is important to define the change of measure\footnote{Note that this is the other way around compared to classical financial markets.} from~$Q$ to $P$, such that we can use the model we defined under~$Q$. In the next section we pursue this idea further and develop a general theory for the intrinsic electricity price.

\begin{figure}[t]
\centering
\begin{tikzpicture}
\draw[rounded corners] (-7, -0.5) rectangle (-5, 0.5) {};
\node[] at (-6, 0) {intraday};
\draw[thick,<->] (-6,-0.7) -- (-6,-1.3);

\draw[rounded corners] (-4, -0.5) rectangle (-2, 0.5) {};
\node[] at (-3, 0) {day-ahead};
\draw[thick,<->] (-3,-0.7) -- (-3,-1.3);

\draw[rounded corners] (-1, -0.5) rectangle (1, 0.5) {};
\node[] at (0, 0) {futures};
\draw[thick,<->] (0,-0.7) -- (0,-1.3);

\draw[rounded corners] (2, -0.5) rectangle (4, 0.5) {};
\node[] at (3, 0) {options};
\draw[thick,<->] (3,-0.7) -- (3,-1.3);

\draw[rounded corners] (-7, -1.5) rectangle (4, -2.5) {};
\node[] at (-1.5, -2) {intrinsic electricity price};

\node[] at (4.5, 0) {$P$};
\node[] at (4.5, -2) {$Q$};

\node[above,gray] at (4.5,-1) {observable};
\draw[thick,gray,dashed] (-7,-1) -- (5,-1);
\node[below,gray] at (4.5, -1) {hidden};

\end{tikzpicture}
\caption{Change of the modelling approach of Figure~\ref{fig:overviewofmeasurechanges} to an approach with an unobservable intrinsic electricity price, which lives under the risk-neutral measure~$Q$. All products traded at the market have dynamics under the real-world measure~$P$.}
\label{fig:overviewunobservableintrinsicprice}
\end{figure}
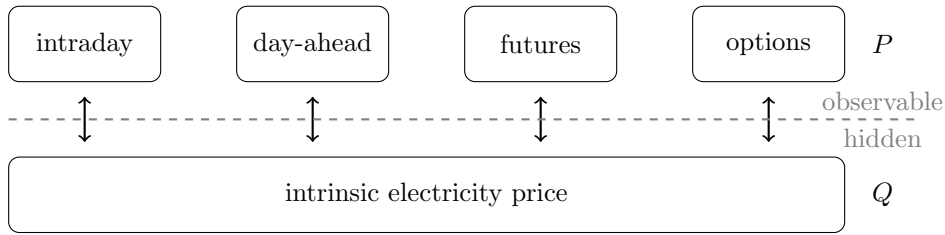

\section{The intrinsic electricity price under $Q$} \label{section:TheoryOfIntrinsicPrice}
Let $(\Omega, \mathcal{A}, Q)$ be a complete probability space. On this probability space we assume $W = \{ W_t ; t \geq 0\}$ to be a $d$-dimensional Brownian motion with augmented natural filtration~$\F = \{ \F_t ; t \geq 0 \}$. For technical convenience we assume that $\F_\infty = \mathcal{A}$. We interpret the Brownian motion~$W$ as realization of the flow of information in the electricity market. In the following we will always denote delivery time by~$\tau$ and trading time by~$t$.

\begin{notation}[Intrinsic electricity price] \label{notation:IntrinsicElectricityPrice}
We introduce the stochastic process denoted by~$p = \{ p(\tau) ; \tau \geq 0 \}$ and call it the \emph{intrinsic electricity price}. 
\end{notation}

By Notation~\ref{notation:IntrinsicElectricityPrice} we mean that $p(\tau)$ equals the average of all actual costs made by all market players to generate 1MW electricity during the delivery time interval~$[\tau$, $\tau + \varepsilon)$ with constant $\varepsilon > 0$. Basically we just introduced the notation $p(\tau) := p\left([\tau, \tau+\varepsilon)\right)$. Because $p(\tau)$ gets it value from the actual occurred costs of all electricity generated in the grid, we call it the \emph{intrinsic price}. The constant~$\varepsilon$ is meant as the delivery duration of our standard or smallest contract, which typically is an hour.\footnote{However, it can also be considered to be any other unit of time such as a quarter hour or a day.} It is clear that the actual costs are not known at the beginning of the interval $[\tau, \tau + \varepsilon)$, i.e. $p(\tau)$ is not $\F_\tau$-measurable. However, when the delivery period is over, i.e. at time $\tau+\varepsilon$, these costs are known or can be derived and, therefore, $p(\tau)$ is $\F_{\tau+\varepsilon}$-measurable.

For notational convenience we will write \emph{delivery time~$\tau$} for the delivery period~$[\tau, \tau+\varepsilon)$ throughout the rest of this paper. Furthermore, we assume our probability space to be equipped with a measure~$Q$ and call this \emph{risk-neutral measure}. The following definition validates this naming:
\begin{definition}[Tradable electricity price]
At trading time~$t$ the price of electricity for delivery time~$\tau$ is defined by
\[
p_t(\tau) := \E_Q [e^{-r(\tau + \varepsilon - t)} p(\tau) \, | \, \F_t],
\]
where $r$ is the risk-free rate. We call $p_t(\tau)$ the \emph{tradable electricity price} at (trading time)~$t$ and for delivery time~$\tau$.
\end{definition}
The tradable electricity price is unobservable and, therefore, not really tradable. However, if in a hypothetical world, electricity with delivery time~$\tau$ would be a storable commodity in the sense that one could buy electricity with delivery time~$\tau$ before the time point~$\tau$ and keep it to consume it during the delivery period~$[\tau, \tau+\varepsilon)$, the tradable electricity price would be its spot price. We do not name it the spot price, since this would cause confusion with the day-ahead and intraday spot markets. The tradable electricity price is an artificial price, to which we can apply the theory of storage. With this construction we artificially fit electricity in the framework of storable commodities.

From the definition of the filtration~$\F$ and the tradable electricity price~$p_t(\tau)$ it is clear that $p_0(\tau) = e^{-r (\tau+\varepsilon)} \E_Q p(\tau)$ and $p_{\tau + \varepsilon}(\tau) = p(\tau)$. Under the risk-neutral measure the discounted tradable assets are $Q$-martingales, i.e. for $t > s$ we have
\[
\E_Q[ e^{-rt} p_t(\tau) \, | \, \F_s] = \E_Q[\E_Q [e^{-r(\tau+\varepsilon)} p(\tau) \, | \, \F_t] \, | \, \F_s]  = e^{-rs} p_s(\tau).
\]
This is the main reason why we define the model under the risk-neutral probability measure~$Q$.

\begin{definition}[Intraday price] \label{definition:IntradayPrice}
The \emph{intraday price} for delivery time~$\tau$ is defined as $I(\tau) := p_\tau(\tau)$.
\end{definition}

In the asymptomatic case where we let the delivery length~$\varepsilon \to 0$, the intraday price tends to the real electricity price~$I(\tau) \to p(\tau)$. Throughout the rest of this paper we denote the length of one day by~$\delta$ for notational convenience.

\begin{definition}[Day-ahead spot price] \label{definition:DayAheadSpotPrice}
The \emph{day-ahead spot price} for delivery time~$\tau$ is defined as $S(\tau) := p_{\tau - \delta}(\tau)$.
\end{definition}


Note that although we write $S(\tau)$, it is $\F_{\tau-\delta}$-measurable. Furthermore, since the stochastic process~$\{ e^{-rt} p_t(\tau) ; t \geq 0 \}$ is a $Q$-martingale by construction we find that $\E_Q[I(\tau) \, | \, \F_{\tau - \delta}] = e^{r \delta} S(\tau)$. This merely states that under the risk-neutral measure~$Q$ the expectation of the intraday price one day in advance, i.e. one day ahead, is the day-ahead spot price. Moreover, we can apply the martingale representation theorem to find:

\begin{corollary} \label{corollary:MartingaleRepresentationTheorem}
For each delivery time~$\tau$ there exists an a.s. unique, predictable, $\R^d$-valued process $\varphi(\tau) = \{\varphi_t(\tau) ; t \geq 0\}$ such that
\[
p_t(\tau) = e^{r t} p_0(\tau) +  e^{-r(\tau + \varepsilon - t)} \int_0^t \varphi_s(\tau)' \cdot dW_s
\]
for all $t \geq 0$.
\end{corollary}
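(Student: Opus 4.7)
The plan is to reduce the statement directly to the Brownian martingale representation theorem (MRT). By construction, the process $M_t := e^{-rt}\,p_t(\tau) = \E_Q[e^{-r(\tau+\varepsilon)} p(\tau)\,|\,\F_t]$ is a $Q$-martingale with respect to the augmented natural filtration $\F$ of the $d$-dimensional Brownian motion $W$, as already verified in the paragraph following the definition of the tradable electricity price. Since $\F$ is exactly the Brownian filtration (and $\F_\infty = \mathcal{A}$), this is precisely the setting in which MRT applies.

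First, I would record the integrability needed so that MRT is applicable in its standard form; concretely, assuming $e^{-r(\tau+\varepsilon)}p(\tau) \in L^2(Q)$, the martingale $M$ is square-integrable, and MRT yields a (a.s.) unique predictable $\R^d$-valued process $\psi(\tau) = \{\psi_s(\tau); s\geq 0\}$ with $\E_Q\int_0^\infty \|\psi_s(\tau)\|^2\,ds < \infty$ such that
\[
M_t = M_0 + \int_0^t \psi_s(\tau)'\cdot dW_s, \qquad t \geq 0.
\]
If only $L^1$-integrability is available, the same conclusion holds via the local-martingale version of MRT, with the uniqueness statement understood modulo indistinguishability.

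Second, I would translate this identity into the form stated in the corollary. Using $M_0 = p_0(\tau)$ and multiplying both sides by $e^{rt}$ gives
\[
p_t(\tau) = e^{rt} p_0(\tau) + e^{rt}\int_0^t \psi_s(\tau)'\cdot dW_s.
\]
Defining $\varphi_s(\tau) := e^{r(\tau+\varepsilon)}\psi_s(\tau)$, which is again predictable and $\R^d$-valued, we have $e^{rt} = e^{-r(\tau+\varepsilon-t)}\cdot e^{r(\tau+\varepsilon)}$, so
\[
e^{rt}\int_0^t \psi_s(\tau)'\cdot dW_s = e^{-r(\tau+\varepsilon-t)}\int_0^t \varphi_s(\tau)'\cdot dW_s,
\]
which is exactly the claimed representation. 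Uniqueness of $\varphi(\tau)$ up to indistinguishability follows from the uniqueness of $\psi(\tau)$ under MRT, since $\tau$ and $\varepsilon$ are fixed constants and the map $\psi \mapsto \varphi$ is a deterministic rescaling.

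The only genuine obstacle is the integrability/regularity assumption on $p(\tau)$ required to invoke MRT; once this is fixed, the remainder is a bookkeeping exercise in rescaling the integrand. I would therefore recommend stating explicitly (either in Notation~\ref{notation:IntrinsicElectricityPrice} or as a standing assumption) that $p(\tau) \in L^1(Q)$ — or $L^2(Q)$ if one prefers the square-integrable version of MRT — so that the conditional expectation defining $p_t(\tau)$ is well-defined and the representation theorem applies without further qualification.
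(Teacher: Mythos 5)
Your proposal is correct and follows essentially the same route as the paper, which simply invokes the martingale representation theorem for the $Q$-martingale $e^{-rt}p_t(\tau)=\E_Q[e^{-r(\tau+\varepsilon)}p(\tau)\,|\,\F_t]$ on the Brownian filtration; your explicit rescaling $\varphi_s(\tau)=e^{r(\tau+\varepsilon)}\psi_s(\tau)$ is exactly the bookkeeping the paper leaves implicit. Your observation that an integrability assumption on $p(\tau)$ (e.g.\ $p(\tau)\in L^1(Q)$ or $L^2(Q)$) should be stated explicitly is a fair and useful refinement, but it does not change the argument.
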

\begin{proof}
This is the exact statement of the martingale representation theorem applied to our setting \citep{Protter2005}.
\end{proof}

\begin{definition}[Price generating process]
We call the a.s. unique process~$\varphi(\tau)$ from Corollary~\ref{corollary:MartingaleRepresentationTheorem} the \emph{price generating process}.
\end{definition}

From Corollary~\ref{corollary:MartingaleRepresentationTheorem} we can derive that the dynamics of the tradable electricity price are given by
\begin{equation} \label{eq:IntrinsicElectricityPriceForwardSDE}
dp_t(\tau) = r p_t(\tau) \, dt + e^{-r(\tau +\varepsilon - t)} \varphi_t(\tau)' \cdot dW_t.
\end{equation}
Furthermore, we immediately see that we have a recursive relation between the tradable electricity prices of a fixed delivery time~$\tau$: for $t \geq u \geq 0$ we have
\[
p_t(\tau) = e^{r (t - u)} p_u(\tau) +  e^{-r(\tau + \varepsilon - t)} \int_u^t \varphi_s(\tau)' \cdot dW_s.
\]
From this relation it immediately follows that:

\begin{corollary} \label{corollary:AlternativeRepresentationIntrinsicElectricityPrice}
An alternative representation of the intrinsic electricity price is
\[
p(\tau) = e^{r (\tau + \varepsilon - t)} p_t(\tau) +  \int_t^{\tau + \varepsilon} \varphi_s(\tau)' \cdot dW_s
\]
for all $\tau + \varepsilon \geq t \geq 0$.
\end{corollary}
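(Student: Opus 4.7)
The plan is to derive the representation as a direct specialization of the recursive relation established just above the corollary, namely
\[
p_t(\tau) = e^{r(t-u)} p_u(\tau) + e^{-r(\tau+\varepsilon-t)} \int_u^t \varphi_s(\tau)' \cdot dW_s
\]
valid for $t \geq u \geq 0$. I would rename indices (take the lower bound to be the ``present'' time $t$ and evaluate the relation at time $\tau+\varepsilon$), and then use the identification $p_{\tau+\varepsilon}(\tau) = p(\tau)$ to read off the claim.

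First I would verify the boundary identity $p_{\tau+\varepsilon}(\tau) = p(\tau)$. By the definition of the tradable electricity price,
\[
p_{\tau+\varepsilon}(\tau) = \E_Q\!\left[e^{-r(\tau+\varepsilon - (\tau+\varepsilon))} p(\tau) \,\big|\, \F_{\tau+\varepsilon}\right] = \E_Q[p(\tau) \,|\, \F_{\tau+\varepsilon}],
\]
and the discussion following Notation~\ref{notation:IntrinsicElectricityPrice} already asserts that $p(\tau)$ is $\F_{\tau+\varepsilon}$-measurable (the actual generation costs are known by the end of the delivery period). Hence the conditional expectation collapses and $p_{\tau+\varepsilon}(\tau) = p(\tau)$.

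Second, I would substitute $u = t$ and replace the upper time argument by $\tau+\varepsilon$ in the recursive relation. The exponential prefactor in front of the stochastic integral reduces to $e^{-r(\tau+\varepsilon - (\tau+\varepsilon))} = 1$, and the exponential in front of $p_t(\tau)$ becomes $e^{r(\tau+\varepsilon-t)}$. Combined with the boundary identity from the previous step, this yields exactly
\[
p(\tau) = e^{r(\tau+\varepsilon - t)} p_t(\tau) + \int_t^{\tau+\varepsilon} \varphi_s(\tau)' \cdot dW_s,
\]
as claimed.

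There is no real obstacle here; the result is essentially a rewriting of Corollary~\ref{corollary:MartingaleRepresentationTheorem} evaluated at the terminal time $\tau+\varepsilon$. The only point worth pausing on is the $\F_{\tau+\varepsilon}$-measurability of $p(\tau)$, which is used to turn the conditional expectation into the identity $p_{\tau+\varepsilon}(\tau) = p(\tau)$; and implicitly, the fact that the predictable process $\varphi(\tau)$ is defined on all of $\R_{\geq 0}$, so that extending the upper limit of the stochastic integral up to $\tau+\varepsilon$ is unproblematic.
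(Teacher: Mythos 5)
Your proposal is correct and follows essentially the same route as the paper: the paper's one-line proof invokes the $\F_{\tau+\varepsilon}$-measurability of $p(\tau)$, i.e.\ the identity $p_{\tau+\varepsilon}(\tau)=p(\tau)$, applied to the recursive relation stated just before the corollary, which is exactly your argument spelled out in more detail.
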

\begin{proof}
Follows by the $\F_{\tau + \varepsilon}$-measurability of the intrinsic electricity price.
\end{proof}


As in the theory of storage we can now introduce the forward price of an electricity contract with delivery~$\tau$. We assume the storage costs~$u$ and convenience yield~$y$ to equal zero, since the electricity is not actually storable. Because the forward can only be settled at the end of the delivery period, the payment date is at $\tau + \varepsilon$ and we have to discount from that time point.

\begin{definition}[Forward price]
The \emph{forward price} is given by 
\[
f_t(\tau) := e^{r(\tau + \varepsilon -t)} p_t(\tau)
\]
for $t \geq 0$.
\end{definition}

It is clear that we have $f_t(\tau) = \E_Q [p(\tau) \, | \, \F_t]$ and thus that for fixed delivery times~$\tau$ the process $\{  f_t(\tau); t\geq 0 \}$ is a $Q$-martingale. Furthermore, from Corollary~\ref{corollary:MartingaleRepresentationTheorem} it follows that
\[
f_t(\tau) =  f_0(\tau) + \int_0^t \varphi_s(\tau)' \cdot dW_s
\]
for all $t \geq 0$.

\begin{idea}
In light of Corollary~\ref{corollary:MartingaleRepresentationTheorem} there are now two equivalent possibilities to assume an explicit model:
\begin{itemize}[nosep]
\item through the intrinsic electricity price~$p(\tau)$ and the computation of its conditional expectation,
\item or through the initial forward price~$f_0(\tau)$ (e.g. the price forward curve, PFC) and the price generating process~$\varphi(\tau)$.
\end{itemize}
We will come back to this in Section~\ref{section:ModelChoiceAndEmpiricalResults}, where we will do an empirical study.
\end{idea}

\begin{remark}[Heath-Jarrow-Morton framework]
Our approach is based on the intrinstic price~$p(\tau)$, which can only be observed after the delivery period is over. However, as a consequence of Corollary~\ref{corollary:MartingaleRepresentationTheorem} we derived the modelling approach of electricity prices through the price generating process~$\varphi$ and the initial forward curve, which usually is called a Heath-Jarrow-Morton~(HJM) approach after the famous framework introduced for interest rates by \citet{Heath1992}. In the context of electricity prices the HJM approach has been studied extensively, e.g. \citet{Hinz2005,Kiesel2009,Latini2018,Hinderks2018,Benth2019}.
\end{remark}

\subsection{Futures} \label{setion:IntrinsicPriceFuturesAndOptions}
Consider a futures contract with increasing delivery times~$\mathcal{T} := \{\tau_1, \tau_2, \dots, \tau_n\}$, i.e. $0 \leq \tau_1 < \tau_2 < \dots < \tau_n$, and financial fulfillment at final delivery~$\tau_n$. Since in the electricity market futures are settled against the spot price, the pay-off at~$\tau_n$ is given by $\sum_{i = 1}^n S(\tau_i)$. It follows that the price of a futures contract is given by
\[
F_t(\mathcal{T}) := \frac{1}{n} \E_Q\left[\sum_{i = 1}^n S(\tau_i) \, \Big| \, \F_t\right] = \frac{e^{-r(\delta + \varepsilon)}}{n} \sum_{i = 1}^n  f_{t \wedge (\tau_i - \delta)}(\tau_i)
\]
for all $t \geq 0$.

\begin{theorem}
The futures price process~$\{  F_t(\mathcal{T}); t\geq 0 \}$ is a $Q$-martingale.
\end{theorem}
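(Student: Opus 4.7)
The plan is to observe that $F_t(\mathcal{T})$ is, up to the constant factor $1/n$, defined as a conditional $Q$-expectation of an $\F$-terminal random variable, so the martingale property should follow almost immediately from the tower property, assuming mild integrability. So first I would verify / assume that $\E_Q |S(\tau_i)| < \infty$ for each $i$ (this is implicit in the well-definedness of the first expression for $F_t(\mathcal{T})$, and in the construction it follows from $S(\tau_i) = p_{\tau_i - \delta}(\tau_i) = e^{-r(\delta + \varepsilon)} f_{\tau_i - \delta}(\tau_i)$ together with the $Q$-martingale property of $f_\cdot(\tau_i)$).

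Given integrability, the core argument is a two-line tower computation: for $0 \leq s \leq t$,
\[
\E_Q[F_t(\mathcal{T}) \mid \F_s] = \frac{1}{n} \E_Q\!\left[ \E_Q\!\left[ \sum_{i=1}^n S(\tau_i) \, \Big| \, \F_t \right] \Big| \, \F_s \right] = \frac{1}{n} \E_Q\!\left[ \sum_{i=1}^n S(\tau_i) \, \Big| \, \F_s \right] = F_s(\mathcal{T}).
\]
Adaptedness is immediate from the definition as a conditional expectation with respect to $\F_t$.

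As a cross-check, I would also derive the result from the second (more explicit) representation $F_t(\mathcal{T}) = \frac{e^{-r(\delta+\varepsilon)}}{n} \sum_i f_{t \wedge (\tau_i - \delta)}(\tau_i)$. Earlier in the excerpt it is shown that for each fixed $\tau_i$ the process $\{f_t(\tau_i); t \geq 0\}$ is a $Q$-martingale. Since $\tau_i - \delta$ is a deterministic (hence bounded) stopping time, the stopped process $\{f_{t \wedge (\tau_i - \delta)}(\tau_i); t \geq 0\}$ remains a $Q$-martingale by optional stopping, and a finite linear combination of $Q$-martingales is again a $Q$-martingale.

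I do not expect any real obstacle here: the only non-cosmetic point is the integrability condition $S(\tau_i) \in L^1(Q)$, which should either be stated as a standing assumption on the model or inherited from a suitable growth condition on the price generating process $\varphi(\tau_i)$ via the It\^o isometry applied to the representation in Corollary~\ref{corollary:MartingaleRepresentationTheorem}. Once integrability is in place, the martingale property is a one-line consequence of the tower property.
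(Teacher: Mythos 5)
Your proof is correct and matches the paper: the paper's own (one-line) argument is precisely your cross-check, namely that $F_t(\mathcal{T})$ is a weighted sum of the $n$ stopped $Q$-martingales $f_{t \wedge (\tau_i - \delta)}(\tau_i)$. Your primary tower-property computation from the definition $F_t(\mathcal{T}) = \tfrac{1}{n}\E_Q[\sum_i S(\tau_i)\,|\,\F_t]$ is an equivalent route, and your remark that $L^1$-integrability of the $S(\tau_i)$ is implicitly assumed is a fair (minor) point the paper leaves unstated.
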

\begin{proof}
The statement holds since the futures price is the weighted sum of $n$ stopped $Q$-martingales.
\end{proof}

From the definition of the tradable electricity price it is immediately clear that for all times $ 0 \leq t \leq \tau_1 - \delta$ the price of a futures is given by
\[
F_t(\mathcal{T}) =  \E_Q\left[ p(\mathcal{T}) \, | \, \F_t \right],
\]
where $ p(\mathcal{T}) :=  \tfrac{1}{n}  e^{-r(\delta + \varepsilon)} \sum_{i = 1}^n p(\tau_i)$. Furthermore, with the help of Corollary~\ref{corollary:MartingaleRepresentationTheorem} we can equivalently write for all times $ 0 \leq t \leq \tau_1 - \delta$
\[
F_t(\mathcal{T}) = F_0(\mathcal{T})  +  \int_0^t \varphi_s(\mathcal{T})' \cdot dW_s,
\]
where we define $\varphi_s(\mathcal{T}) := \tfrac{1}{n}  e^{-r(\delta + \varepsilon)} \sum_{i = 1}^n \varphi_s(\tau_i)$.

\subsection{Real-world measure $P$} \label{section:IntrinsicPriceMeasureChange}
Since the prices of the traded products move under the real-world measure~$P$, cf. Figure~\ref{fig:overviewunobservableintrinsicprice}, we need a to change to this measure to simulate the intrinsic process. In this section we assume that we change from the risk-neutral measure~$Q$ to the real-world measure~$P$ by its Radon-Nikodym derivative, i.e. 
\[
\nu_t := \frac{dP}{dQ} \Big|_{\F_t}
\]
for all $t \geq 0$. It is common to use the stochastic exponential to define the Radon-Nikodym derivative:
\begin{definition}
For an adapted $\R^d$-valued process~$\theta = \{ \theta_t ; t \geq 0 \}$ we define the Radon-Nikodym by
\[
\nu_t := \exp \left( \int_0^t \theta_s' \cdot dW_s - \frac{1}{2} \int_0^t \theta_s' \cdot \theta_s \, ds \right),
\]
i.e. by the stochastic exponential of $\int_0^t \theta_s' \cdot dW_s$.
\end{definition}

We assume that the Novikov condition is fulfilled, i.e.
\[
\E_Q\left[ e^{-\frac{1}{2} \int_0^t \theta_s' \cdot \theta_s \, ds} \right] < \infty
\]
for all $t \geq 0$. The Girsanov theorem then tells us that $\tilde{W}_t := W_t - \int_0^t \theta_s \, ds$ is a Brownian motion under $P$, cf. \citet{Korn2001}. Using this Brownian motion we can rewrite the tradable electricity price as
\[
p_t(\tau) = e^{r t} p_0(\tau) + e^{-r(\tau +\varepsilon - t)} \int_0^t \varphi_s(\tau)' \cdot \theta_s \, ds +  e^{-r(\tau +\varepsilon- t)} \int_0^t \varphi_s(\tau)' \cdot d\tilde{W}_s
\]
under $P$.

Since we consider the real-world measure~$P$ and the risk-neutral measure~$Q$ to be two different measures, it follows that we can also define a risk premium in this setting as defined in Equation~\eqref{eq:DefinitionRiskPremium}:
\begin{definition}[Risk premium]
We call the $\F_t$-measurable random variable
\[
\pi_t(\tau) := f_t(\tau) - \E_P\left[ p_\tau(\tau) \, | \, \F_t \right]
\]
the \emph{risk premium} for delivery time~$\tau$.
\end{definition}

Recall that $p_t(\tau)$ is the unobservable tradable electricity price and plays the same role in our theory as the spot price of storable commodities. The risk premium can alternatively be written as
\begin{align*}
\pi_t(\tau) &=  \E_Q[p(\tau) \, | \, \F_t] -  \E_P[  \E_Q[p(\tau) \, | \, \F_\tau] \, | \, \F_t] \\
&=  \E_Q\left[ \left(1 - \frac{\nu_\tau}{\nu_t} \right)p(\tau) \, | \, \F_t\right],
\end{align*}
Note that here we change from $P$ to $Q$ instead of the other way around, which is more common in financial mathematics. 

\begin{theorem}
The risk premium is given by
\[
\pi_t(\tau) =  \E_Q\left[ \left(1 - e^{\int_t^\tau \theta_s' \cdot dW_s - \frac{1}{2} \int_t^\tau \theta_s' \cdot \theta_s \, ds } \right)   \int_t^{\tau + \varepsilon} \varphi_s(\tau)' \cdot dW_s \right]
\]
for all $t \leq \tau + \varepsilon$.
\end{theorem}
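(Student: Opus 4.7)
The plan is to start from the second line of the derivation that directly precedes the theorem statement,
\[
\pi_t(\tau) = \E_Q\!\left[\left(1 - \tfrac{\nu_\tau}{\nu_t}\right) p(\tau) \,\Big|\, \F_t\right],
\]
and split the random variable $p(\tau)$ into an $\F_t$-measurable part and a stochastic-integral part, using the alternative representation of the intrinsic electricity price from Corollary~\ref{corollary:AlternativeRepresentationIntrinsicElectricityPrice}. Concretely, I would substitute
\[
p(\tau) = e^{r(\tau+\varepsilon-t)} p_t(\tau) + \int_t^{\tau+\varepsilon} \varphi_s(\tau)' \cdot dW_s
\]
and then use linearity of the conditional expectation to break $\pi_t(\tau)$ into a ``deterministic-in-$\F_t$'' term and an ``integral'' term.

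For the first term, $p_t(\tau)$ is $\F_t$-measurable, so it factors out of the conditional expectation, leaving the factor $\E_Q[\,1 - \nu_\tau/\nu_t \mid \F_t\,]$. Since $\nu$ is a $Q$-martingale (as the stochastic exponential of a Brownian integral satisfying the Novikov condition), one has $\E_Q[\nu_\tau \mid \F_t] = \nu_t$, and this factor vanishes. Hence the whole first term drops out, leaving only
\[
\pi_t(\tau) = \E_Q\!\left[\left(1 - \tfrac{\nu_\tau}{\nu_t}\right) \int_t^{\tau+\varepsilon} \varphi_s(\tau)' \cdot dW_s \,\Big|\, \F_t\right].
\]
The final step is to substitute the explicit stochastic-exponential form of the Radon--Nikodym derivative,
\[
\tfrac{\nu_\tau}{\nu_t} = \exp\!\left( \int_t^\tau \theta_s' \cdot dW_s - \tfrac{1}{2}\int_t^\tau \theta_s' \cdot \theta_s \, ds\right),
\]
which immediately yields the claimed expression (the outer $\E_Q$ in the stated formula should of course be read as $\E_Q[\,\cdot\mid\F_t]$, since $\pi_t(\tau)$ is $\F_t$-measurable).

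I do not foresee a substantive obstacle: all ingredients are already available in the excerpt. The only subtle point is that one needs the $Q$-martingale property of $\nu$ to kill the $\F_t$-measurable piece of $p(\tau)$, so that the risk premium depends only on the ``post-$t$'' fluctuations of the price generating process $\varphi(\tau)$, weighted by the relative density $1 - \nu_\tau/\nu_t$. If desired, one could also observe that since $\int_0^\cdot \varphi_s(\tau)' \cdot dW_s$ and $\nu$ share the same driving Brownian motion, the expression can be further simplified via Girsanov; but that is not needed to prove the theorem as stated.
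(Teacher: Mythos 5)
Your proposal follows essentially the same route as the paper: start from $\pi_t(\tau)=\E_Q[(1-\nu_\tau/\nu_t)\,p(\tau)\,|\,\F_t]$, substitute the decomposition of $p(\tau)$ from Corollary~\ref{corollary:AlternativeRepresentationIntrinsicElectricityPrice}, and plug in the stochastic-exponential form of $\nu_\tau/\nu_t$. In fact your write-up is slightly more careful than the paper's, since you make explicit why the $\F_t$-measurable term $e^{r(\tau+\varepsilon-t)}p_t(\tau)$ drops out (the $Q$-martingale property of $\nu$ under Novikov) and you correctly note that the expectation in the stated formula must be read as conditional on $\F_t$.
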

\begin{proof}
We use Corollary~\ref{corollary:AlternativeRepresentationIntrinsicElectricityPrice} to see that for $t \leq \tau + \varepsilon$ we have
\[
p(\tau) =  e^{r t} p_t(\tau) + \int_t^{\tau+\varepsilon} \varphi_s(\tau)' \cdot dW_s,
\]
where the first term is $\F_t$-measurable and the second term is independent of $\F_t$. Now we directly compute
\[
\pi_t(\tau) =  \E_Q\left[ \left(1 - e^{\int_t^\tau \theta_s' \cdot dW_s - \frac{1}{2} \int_t^\tau \theta_s' \cdot \theta_s \, ds } \right) p(\tau) \, | \, \F_t\right]
\]
where the result follows by plugging in the representation of $p(\tau)$ that we just derived.
\end{proof}

The interpretation of the above theorem is clear: the risk premium is the expected uncertainty left in the intrinsic price, i.e. the integral over the price generating process from $t$ to $\tau$, weighted with the change induced through the measure change.

\section{Explicit model choice and empirical results} \label{section:ModelChoiceAndEmpiricalResults}
In this section we assume an explicit model for the intrinsic electricity price~$p(\tau)$ by using a structural model approach. Section~\ref{section:ExplicitStructuralModelChoice} proposes the explicit model and Section~\ref{section:ExplicitModelResults} discusses its empirical results. The goal of this section is merely to give an example of what can be done within the framework of the intrinsic electricity price.

\subsection{Structural model} \label{section:ExplicitStructuralModelChoice}
Structural models have their roots in the work of \citet{Barlow2002} and there have been many studies extending this idea, e.g. \citet{Aid2009,Lyle2009,Wagner2014}. As in \citet{Wagner2014} we assume that the ex post\footnote{With ex post we mean that the system load $G_{\tau+\varepsilon}$ is the system load for the delivery period from $\tau$ to $\tau+ \varepsilon$.} \emph{system load} or \emph{system generation}\footnote{The system demand and system generation are always balanced, therefore we can take either one.}~$G_\tau$ is defined by
\[
G_\tau := g(\tau) + X_\tau,
\]
where $g(\tau)$ is a deterministic seasonality function capturing all cyclic and seasonal behaviour and $X_\tau$ is a Gaussian Ornstein-Uhlenbeck~(OU) process. The mean-reverting process~$X_\tau$ is the solution of the following stochastic differential equation under $Q$:
\[
dX_\tau =   -\lambda  X_\tau \, d\tau + \sigma \, dW_\tau, \quad X_0 = x_0 \in \R
\]
where $W$ is a one-dimensional Brownian motion and $\lambda > 0$, $\sigma > 0$, and $\mu$ are real-valued model parameters. Its strong solution is given by
\[
X_\tau = e^{-\lambda \tau} x_0  + \int_0^\tau \sigma e^{-\lambda(\tau  - s)} \, dW_s.
\]
Recall that $\varepsilon > 0$ is the duration of the delivery period, which is fixed. As an auxiliary time variable we define ex post delivery time~$\tau_e := \tau + \varepsilon$. Using the system load as in the structural modes of \citet{Wagner2014} we can define the intrinsic electricity price as
\begin{equation} \label{eq:EmpiricalStructuralModelIntrinsicElectricityPrice}
p(\tau) := e^{\alpha_1 (G_{\tau_e} - \beta_1 )} - e^{\alpha_2 (G_{\tau_e} - \beta_2 )} + \gamma_3(\tau),
\end{equation}
where $\alpha_1 >0$, $\alpha_2 < 0$, $\beta_1$, and $\beta_2$ are real-valued parameters, and $\gamma_3(\tau)$ is a deterministic function. With the help of the auxiliary process
\[
\gamma_i(t; \tau) := \exp \left\{ \alpha_i \left(  g(\tau_e)  +  e^{-\lambda(\tau_e - t)} X_t + \frac{\alpha_i \sigma^2}{4 \lambda} \left(1 - e^{-2\lambda(\tau_e - t)}\right) - \beta_i \right) \right\}
\]
for $i =1, 2$, we can derive the tradable electricity price:
\begin{lemma}[Tradable electricity price] \label{lemma:EmpiricalIntrinsicElectricityPriceTradablePrice}
The tradable electricity price is given by
\[
p_t(\tau) = e^{-r(\tau_e -t)} \left( \gamma_1(t; \tau) - \gamma_2(t; \tau) + \gamma_3(\tau) \right)
\]
for all $ t \leq \tau_e$.
\end{lemma}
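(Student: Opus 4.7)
The plan is to compute the conditional expectation directly from the definition of the tradable electricity price, exploiting the Gaussian structure of the Ornstein--Uhlenbeck process.

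First I would write out
\[
p_t(\tau) = \E_Q\!\left[e^{-r(\tau_e - t)} p(\tau) \,\big|\, \F_t\right] = e^{-r(\tau_e - t)} \E_Q\!\left[e^{\alpha_1(G_{\tau_e} - \beta_1)} - e^{\alpha_2(G_{\tau_e} - \beta_2)} + \gamma_3(\tau) \,\big|\, \F_t\right],
\]
pulling the (deterministic) discount factor out and noting that $\gamma_3(\tau)$ is non-random. It then suffices to evaluate $\E_Q[e^{\alpha_i G_{\tau_e}} \mid \F_t]$ for $i=1,2$, and the whole question reduces to computing the conditional moment generating function of $X_{\tau_e}$ given $\F_t$.

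Second, using the strong solution of the OU SDE, I split
\[
X_{\tau_e} = e^{-\lambda(\tau_e - t)} X_t + \int_t^{\tau_e} \sigma e^{-\lambda(\tau_e - s)} \, dW_s,
\]
where the first term is $\F_t$-measurable and the second term is, conditionally on $\F_t$, a centered Gaussian with variance $\int_t^{\tau_e} \sigma^2 e^{-2\lambda(\tau_e - s)} \, ds = \tfrac{\sigma^2}{2\lambda}\bigl(1 - e^{-2\lambda(\tau_e - t)}\bigr)$. Applying the standard Gaussian MGF $\E[e^{\alpha Z}] = \exp(\alpha m + \tfrac12 \alpha^2 v)$ yields
\[
\E_Q\!\left[e^{\alpha_i X_{\tau_e}} \,\big|\, \F_t\right] = \exp\!\left(\alpha_i e^{-\lambda(\tau_e - t)} X_t + \frac{\alpha_i^2 \sigma^2}{4\lambda}\bigl(1 - e^{-2\lambda(\tau_e - t)}\bigr)\right).
\]

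Third, I would multiply by the deterministic factor $e^{\alpha_i(g(\tau_e) - \beta_i)}$ coming from $G_{\tau_e} = g(\tau_e) + X_{\tau_e}$ and recognise that the resulting exponent is exactly
\[
\alpha_i\!\left(g(\tau_e) + e^{-\lambda(\tau_e - t)} X_t + \tfrac{\alpha_i \sigma^2}{4\lambda}\bigl(1 - e^{-2\lambda(\tau_e - t)}\bigr) - \beta_i\right),
\]
so that $\E_Q[e^{\alpha_i(G_{\tau_e} - \beta_i)} \mid \F_t] = \gamma_i(t;\tau)$. Substituting for $i=1,2$ and adding $\gamma_3(\tau)$ gives the claimed identity.

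There is no real obstacle here: the argument is a routine application of the Gaussian MGF to the known conditional law of the OU process. The only care needed is bookkeeping, making sure the $\tfrac{\alpha_i^2}{2}$ from the MGF combines with the variance $\tfrac{\sigma^2}{2\lambda}(1 - e^{-2\lambda(\tau_e - t)})$ to produce the factor $\tfrac{\alpha_i \sigma^2}{4\lambda}(1 - e^{-2\lambda(\tau_e - t)})$ that appears inside $\gamma_i(t;\tau)$ (one factor of $\alpha_i$ is absorbed into the outer $\alpha_i(\cdots)$), and that the validity of the formula extends up to $t \le \tau_e$ including the terminal case where the variance vanishes and $\gamma_i(\tau_e;\tau) = e^{\alpha_i(G_{\tau_e} - \beta_i)}$ as it should.
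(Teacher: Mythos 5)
Your proof is correct and follows essentially the same route as the paper: the same decomposition $X_{\tau_e} = e^{-\lambda(\tau_e-t)}X_t + \int_t^{\tau_e}\sigma e^{-\lambda(\tau_e-s)}\,dW_s$ into an $\F_t$-measurable part plus an independent Gaussian, followed by the lognormal (Gaussian MGF) computation. Your version simply spells out the variance bookkeeping that the paper leaves implicit.
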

\begin{proof}
Using the fact that
\[
X_{\tau_e} = e^{-\lambda (\tau_e - t)} X_t  +  \int_t^{\tau_e} \sigma e^{-\lambda(\tau_e - s)} \, dW_s,
\]
we see that 
\[
\E_Q\left[ e^{\alpha_i X_{\tau_e}} \, | \, \F_t \right] = e^{\alpha_i e^{-\lambda (\tau_e - t)} X_t  } \E_Q \left[ e^{\alpha_i \int_t^{\tau_e} \sigma e^{-\lambda(\tau_e - s)} \, dW_s}\right]
\]
for $i = 1, 2 $. From this the result follows by explicit computation of the expectation of the lognormal distribution.
\end{proof}

It follows directly that
\[
p_t(\tau) = e^{rt} p_0(\tau) + e^{-r(\tau_e-t)} \left\{ \left[\gamma_1(t; \tau) - \gamma_1(0; \tau)\right] - \left[\gamma_2(t; \tau) - \gamma_2(0; \tau) \right]\right\}.
\]
and in particular
\[
f_t(\tau) = \gamma_1(t; \tau) - \gamma_2(t; \tau) + \gamma_3(\tau)
\]
for all $t \leq \tau_e$. From the above equation we can derive the price generating process with the help of Theorem~\ref{corollary:MartingaleRepresentationTheorem}:
\begin{proposition}[Price generating process]
The price generating process process is given by
\[
\varphi_t(\tau) = 
\begin{cases}
\sigma e^{-\lambda (\tau_e- t)} \left[ \alpha_1  \gamma_1(t; \tau) - \alpha_2 \gamma_2(t; \tau) \right], \quad & \text{if } t \leq \tau_e , \\
0, & \text{else},
\end{cases}
\]
for all $\tau  \geq 0$.
\end{proposition}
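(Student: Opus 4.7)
The strategy is to invoke the martingale representation already established in Corollary~\ref{corollary:MartingaleRepresentationTheorem}, which ensures
\[
f_t(\tau) = f_0(\tau) + \int_0^t \varphi_s(\tau)'\cdot dW_s,
\]
so that $\varphi_t(\tau)$ is uniquely determined as the diffusion coefficient of $f_t(\tau)$. Since we have the closed form $f_t(\tau) = \gamma_1(t;\tau) - \gamma_2(t;\tau) + \gamma_3(\tau)$, and $\gamma_3(\tau)$ does not depend on $t$, it suffices to compute the diffusion coefficients of $\gamma_1(\,\cdot\,;\tau)$ and $\gamma_2(\,\cdot\,;\tau)$ and take their difference.

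First I would apply It\^o's formula to $\gamma_i(t;\tau) = \exp h_i(t,X_t)$ with
\[
h_i(t,x) = \alpha_i\Bigl( g(\tau_e) + e^{-\lambda(\tau_e-t)} x + \tfrac{\alpha_i \sigma^2}{4\lambda}\bigl(1 - e^{-2\lambda(\tau_e-t)}\bigr) - \beta_i \Bigr),
\]
using $dX_t = -\lambda X_t\,dt + \sigma\,dW_t$. A direct computation of the partial derivatives gives $\partial_x h_i = \alpha_i e^{-\lambda(\tau_e-t)}$ and $\partial_t h_i = \alpha_i\lambda e^{-\lambda(\tau_e-t)} x - \tfrac{1}{2}\alpha_i^2\sigma^2 e^{-2\lambda(\tau_e-t)}$. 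Substituting into It\^o's formula the drift terms collapse: the $\partial_t$-term cancels with $\partial_x h_i\cdot(-\lambda X_t)$ in the $x$-linear part, and the remaining $-\tfrac{1}{2}\alpha_i^2\sigma^2 e^{-2\lambda(\tau_e-t)}$ is exactly balanced by the It\^o correction $\tfrac{1}{2}(\partial_x h_i)^2\sigma^2 = \tfrac{1}{2}\alpha_i^2\sigma^2 e^{-2\lambda(\tau_e-t)}$ from $e^{h_i}$. This cancellation is the internal consistency check: $\gamma_i(t;\tau)$ is a conditional expectation, hence a $Q$-martingale, so the drift must vanish. The outcome is
\[
d\gamma_i(t;\tau) = \alpha_i \sigma e^{-\lambda(\tau_e-t)}\gamma_i(t;\tau)\, dW_t.
\]

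Reading off the diffusion coefficient of $f_t(\tau) = \gamma_1(t;\tau) - \gamma_2(t;\tau) + \gamma_3(\tau)$ yields
\[
\varphi_t(\tau) = \sigma e^{-\lambda(\tau_e-t)}\bigl[\alpha_1\gamma_1(t;\tau) - \alpha_2\gamma_2(t;\tau)\bigr] \quad\text{for } t \le \tau_e,
\]
as claimed. For $t > \tau_e$, I would observe that $p(\tau)$ is $\F_{\tau_e}$-measurable, so $f_t(\tau) = \E_Q[p(\tau)\mid\F_t] = p(\tau)$ is constant in $t$, forcing $\varphi_t(\tau) = 0$ by the a.s.\ uniqueness clause of Corollary~\ref{corollary:MartingaleRepresentationTheorem}.

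There is no real obstacle; the only point requiring care is the It\^o bookkeeping that verifies drift cancellation. Once that is in hand, the statement is essentially immediate from matching diffusion coefficients in the martingale representation.
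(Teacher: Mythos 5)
Your proof is correct and follows essentially the same route as the paper: apply It\^o's formula to $\gamma_i$, verify that the drift cancels (which the paper also notes via the martingale property), and read off the diffusion coefficient from the representation $f_t(\tau) = f_0(\tau) + \int_0^t \varphi_s(\tau)'\cdot dW_s$. The only cosmetic difference is that the paper first rewrites $\gamma_i$ in terms of an auxiliary martingale $M_t$ with $dM_t = \sigma e^{-\lambda(\tau_e-t)}\,dW_t$ whereas you work directly with the Ornstein-Uhlenbeck process $X_t$, and you additionally spell out the $t>\tau_e$ case, which the paper leaves implicit.
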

\begin{proof}
From Corollary~\ref{corollary:MartingaleRepresentationTheorem} we know that we should find $\varphi_t(\tau)$ such that
\[
\int_0^t \varphi_s(\tau) \, dW_s =  \left[\gamma_1(t; \tau) - \gamma_1(0; \tau)\right] - \left[\gamma_2(t; \tau) - \gamma_2(0; \tau) \right].
\]
We introduce an auxiliary processs
\[
dM_t = \sigma e^{-\lambda (\tau_e - t)} \, dW_t, \quad M_0 = 0,
\]
and rewrite
\[
\gamma_i(t; \tau) = e^{ \alpha_i \left( g(\tau_e) +  e^{-\lambda \tau_e} x_0 +  M_t + \frac{\alpha_i \sigma^2}{4 \lambda} \left(1 - e^{-2\lambda(\tau_e - t)}\right) - \beta_i \right) }
\]
We apply It\^o's lemma on $\gamma_i$ and $M_t$ to find that
\[
d\gamma_i  = \left( \frac{\partial}{\partial t}\gamma_i + \frac{\sigma^2}{2} e^{-2\lambda (\tau_e - t)} \, \frac{\partial^2}{\partial x^2} \gamma_i \right) dt + \sigma e^{-\lambda (\tau_e - t)}  \frac{\partial}{\partial x}\gamma_i \, dW_t.
\]
Recalling that $\alpha_i^{-2} \frac{\partial^2}{\partial x^2} \gamma_i = \alpha_i^{-1}  \frac{\partial}{\partial x} \gamma_i = \gamma_i$ and computing the derivative with respect to time
\[
\frac{\partial}{\partial t}\gamma_i =  -\frac{\alpha_i^2 \sigma^2}{2}  e^{-2\lambda(\tau_e - t)} \, \gamma_i ,
\]
then yields
\[
d\gamma_i  =  \alpha_i \sigma e^{-\lambda (\tau_e - t)} \gamma_i \, dW_t,
\]
which shows the result.
\end{proof}

\subsection{Empirical results} \label{section:ExplicitModelResults}
In this section we calibrate the model to real data. In light of our data study we want to emphasize that in this paper we our main goal was to set up the concept of the intrinsic electricity price and how it relates theoretically to the existing work. With the data study in this section we merely want to show one explicit model choice and its practical applications and effects. Therefore, this study is indifferent to the fact that the most recent market data is not available. As such the risk premium that we find in this section, is also not meant as a value for the current risk premium.

Throughout the rest of this section we assume that we measure time in hours. Therefore, we assume~$\varepsilon = 1$ and $\delta = 24$. We will evaluate contracts with delivery times of the form~$\tau = k \varepsilon$ for $k\in\mathbb{N}$. For the annual risk-free interest rate we choose~$r= 0.001$.

\begin{remark}[Data set] \label{remark:DataSet}
We have the following data from the German/Austrian market:
\begin{itemize}[nosep]
\item the hourly system load~$G_{\tau_e}$ from 1 January 2014 to 15 April 2018,
\item the hourly day-ahead spot prices~$S^M(\tau)$ and the hourly $\text{ID}_3$ prices\footnote{Since there is no unique intraday price, we assume the German intraday index $\text{ID}_3$ to be `the' intraday price.}~$I^M(\tau)$ from 28 June 2015 to 15 April 2018.
\end{itemize}
We use the whole data set for the estimation.
\end{remark}

\begin{remark}[Dynamics under $P$] \label{remark:EmpiricalSystemLoadUnderP}
Assuming that the Girsanov parameter as introduced in Section~\ref{section:IntrinsicPriceMeasureChange} is constant $\theta_t \equiv \lambda \theta \in \R$, we find that the Ornstein-Uhlenbeck process~$X_{\tau_e}$ can be rewritten under $P$ as
\[
X_{\tau} = e^{-\lambda\tau} x_0  +  \left(1 - e^{-\lambda\tau}\right) \sigma \theta + \int_0^{\tau} \sigma e^{-\lambda(\tau - s)} \, d\tilde{W}_s,
\]
where $\tilde{W}_t$ is a $P$-Brownian motion. It follows that we can split $G_\tau = \tilde{g}(\tau) + \tilde{X}_{\tau}$ under $P$, if $\tilde{X}$ is a $P$-Gaussian Ornstein-Uhlenbeck process defined by
\[
d\tilde{X}_\tau = -\lambda \tilde{X}_\tau \, d\tau + \sigma \, d\tilde{W}_\tau, \quad \tilde{X}_0 = x_0
\]
and when we define
\[
\tilde{g}(\tau) := g(\tau) + \left(1 - e^{-\lambda \tau}\right)  \sigma \theta.
\]
Assuming the mean reversion speed~$\lambda$ is small we can use the first order approximation~$1 - e^{-\lambda \tau} \approx \lambda \tau $ to find
\[
\tilde{g}(\tau) \approx g(\tau) + \lambda \sigma \theta \tau,
\]
which we will use to deseasonalize the system load~$G_\tau$ under $P$. Furthermore, in the approximated setting we have the following relation $X_\tau = \tilde{X}_\tau + \lambda \sigma \theta \tau$ between the two Ornstein-Uhlenbeck processes.
\end{remark}

\begin{figure}[p]
\centering
\begin{subfigure}[c]{\textwidth}
\includegraphics[width=\textwidth]{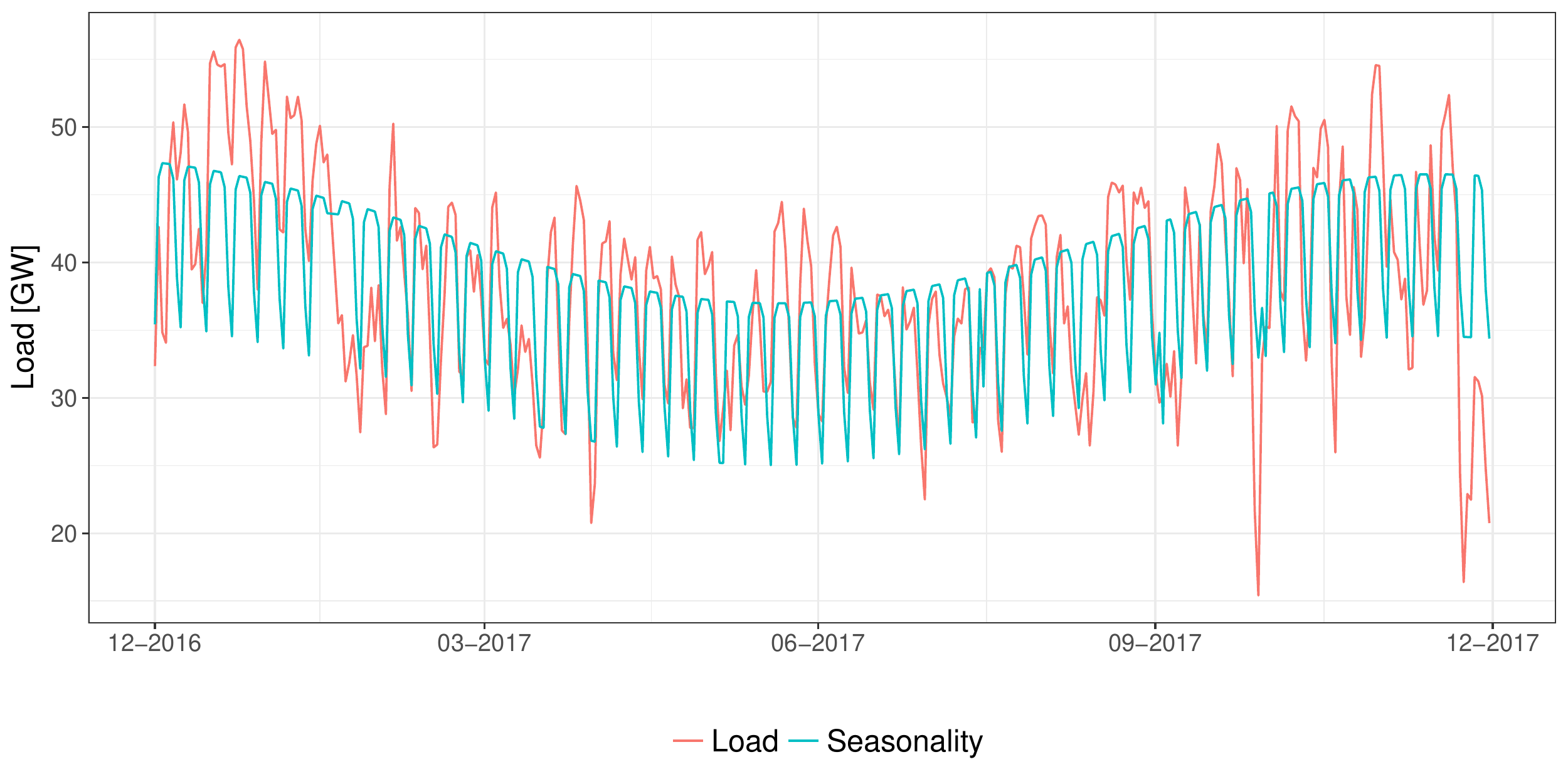}
\subcaption{Actual baseload system load~$G_\tau$ for the year 2017 together with the estimated seasonality function~$\tilde{g}(\tau)$.}
\label{fig:TotalGenerationWithSeasonality2017}
\end{subfigure}

\begin{subfigure}[c]{\textwidth}
\centering
\includegraphics[width=\textwidth]{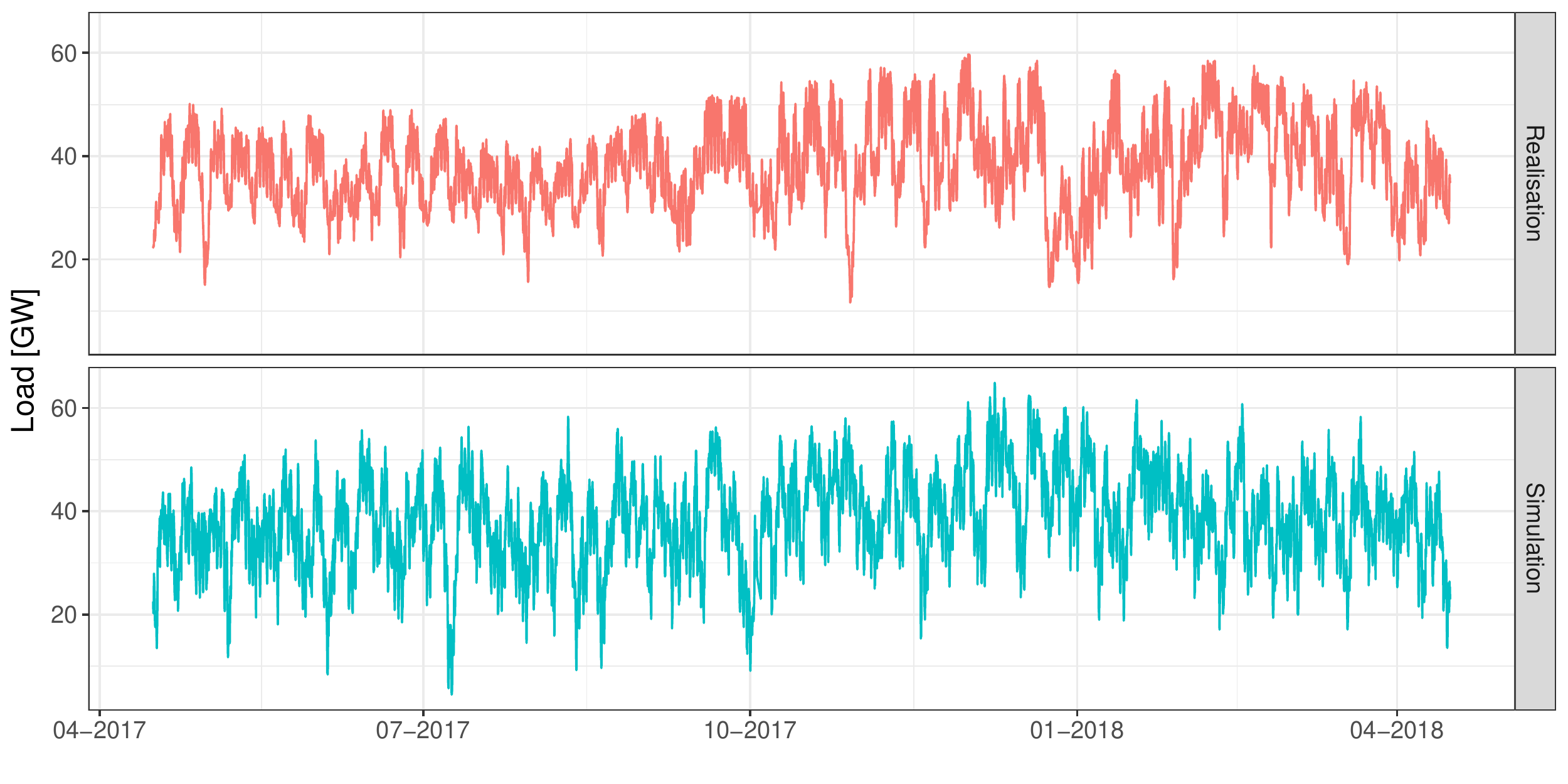}
\subcaption{Simulation of a path of the system generation~$G_\tau$ with hourly granularity for the last year of the data set, i.e. from 16 April 2017 to 15 April 2018.}
\label{fig:EmpiricalIntrinsicSystemGenerationSimulation}
\end{subfigure}
\caption{System load.}
\end{figure}

As discussed in Remark~\ref{remark:EmpiricalSystemLoadUnderP} the system load~$G_\tau$ moves under~$P$. We define the $P$-load seasonality function
\begin{equation} \label{eq:EmpiricalIntrinsicGTilde}
\tilde{g}(\tau) := z_0 +  z_1 \tau  + z_2 \sin\left( \tfrac{2 \pi}{365 \cdot 24} \tau\right) + z_3 \cos\left( \tfrac{2 \pi}{365 \cdot 24} \tau\right) + \text{DoW}_\tau + \text{HoD}_\tau,
\end{equation}
where $ \text{DoW}_\tau$ and $ \text{HoD}_\tau$ are dummy variables\footnote{This means they take a different constant value for a different \emph{day of the week}~(DoW) and \emph{hour of the day}~(HoD). Mathematically, they are just the sum of weighted indicator functions.} for the day of the week\footnote{We define four classes of weekdays: Mondays and Fridays; Tuesdays, Wednesdays, and Thursdays; Saturdays, bridge days (i.e. a days between a holiday and a weekend), and partial holidays (i.e. holidays in some but not all German federal states); Sundays and holidays.} and hour of the day. We directly estimate $\tilde{g}$ by linear least squares from the load data. Figure~\ref{fig:TotalGenerationWithSeasonality2017} shows the estimated seasonality together with the system load for the year 2017. The estimate $\tilde{g}$ can be used to deseasonalize the data $\tilde{X}_\tau = G_\tau - \tilde{g}(\tau)$, after which $\lambda$ and $\sigma$ can be estimated by maximum likelihood. The estimates of $\lambda$ and $\sigma$ are shown in Table~\ref{table:EmpiricalIntrinsicPriceParameters}. Figure~\ref{fig:EmpiricalIntrinsicSystemGenerationSimulation} illustrates a sample path of the system load~$G_\tau$ modelled with the estimated parameters.

In order to proceed with the estimation from market prices we need an estimate of the seasonality function~$\gamma_3$. We estimate the same type of formula as for~$\tilde{g}$, cf. Equation~\eqref{eq:EmpiricalIntrinsicGTilde}. We estimated $\gamma_3$ with linear least squares to a mixture of the day-ahead and intraday spot prices $\frac{I^M + S^M}{1+e^{-r\delta} }$. This corresponds approximately to the seasonality of the intrinsic price.

\begin{table}[t]
\centering
\bgroup
\def\arraystretch{1.3}
 \begin{tabular}{l | r }
 \toprule
 Parameter &  Value \\
 \midrule
$\lambda$ & 0.0298 \\
$\sigma$ & 1.4988 \\
$x_0$ & -12.5776 \\
\midrule 
$\alpha_1$ & 0.1949 \\
$\alpha_2$ & -0.1796 \\
$\beta_1$ & 43.8799 \\
$\beta_2$ & 37.4548 \\
$\theta$ & -0.0036 \\
 \bottomrule
 \end{tabular}
\egroup
 \caption{Estimated parameters of the structural model.}
 \label{table:EmpiricalIntrinsicPriceParameters}
\end{table}

We can combine the above Remark~\ref{remark:EmpiricalSystemLoadUnderP} to calibrate the supply function parameters $\alpha_1$, $\alpha_2$, $\beta_1$, and $\beta_2$ together with~$\theta$. We use the \texttt{R} function \texttt{optim} with method \texttt{BFGS} to minimize the mean squared error of the realized and theoretical day-ahead and intraday prices. The theoretical prices are given by Lemma~\ref{lemma:EmpiricalIntrinsicElectricityPriceTradablePrice}. This means that we minimize
\begin{equation} \label{eq:OptimizationProblemForCalibration}
\min_{\alpha_1, \alpha_2, \beta_1, \beta_2, \theta} \frac{1}{2N} \sqrt{\sum_{k=1}^N (I^M(k\varepsilon) - I(k\varepsilon))^2 + \sum_{k=1}^N (S^M(k\varepsilon) - S(k\varepsilon))^2},
\end{equation}
where the superscript $M$ stands for the market price. As initial parameters we used the ones obtained from fitting the intraday prices directly to the formula for the intrinsic electricity price of Equation~\eqref{eq:EmpiricalStructuralModelIntrinsicElectricityPrice}. The results of the estimation procedure are given in Table~\ref{table:EmpiricalIntrinsicPriceParameters}. 

\begin{figure}[t]
\centering
\includegraphics[width=\textwidth]{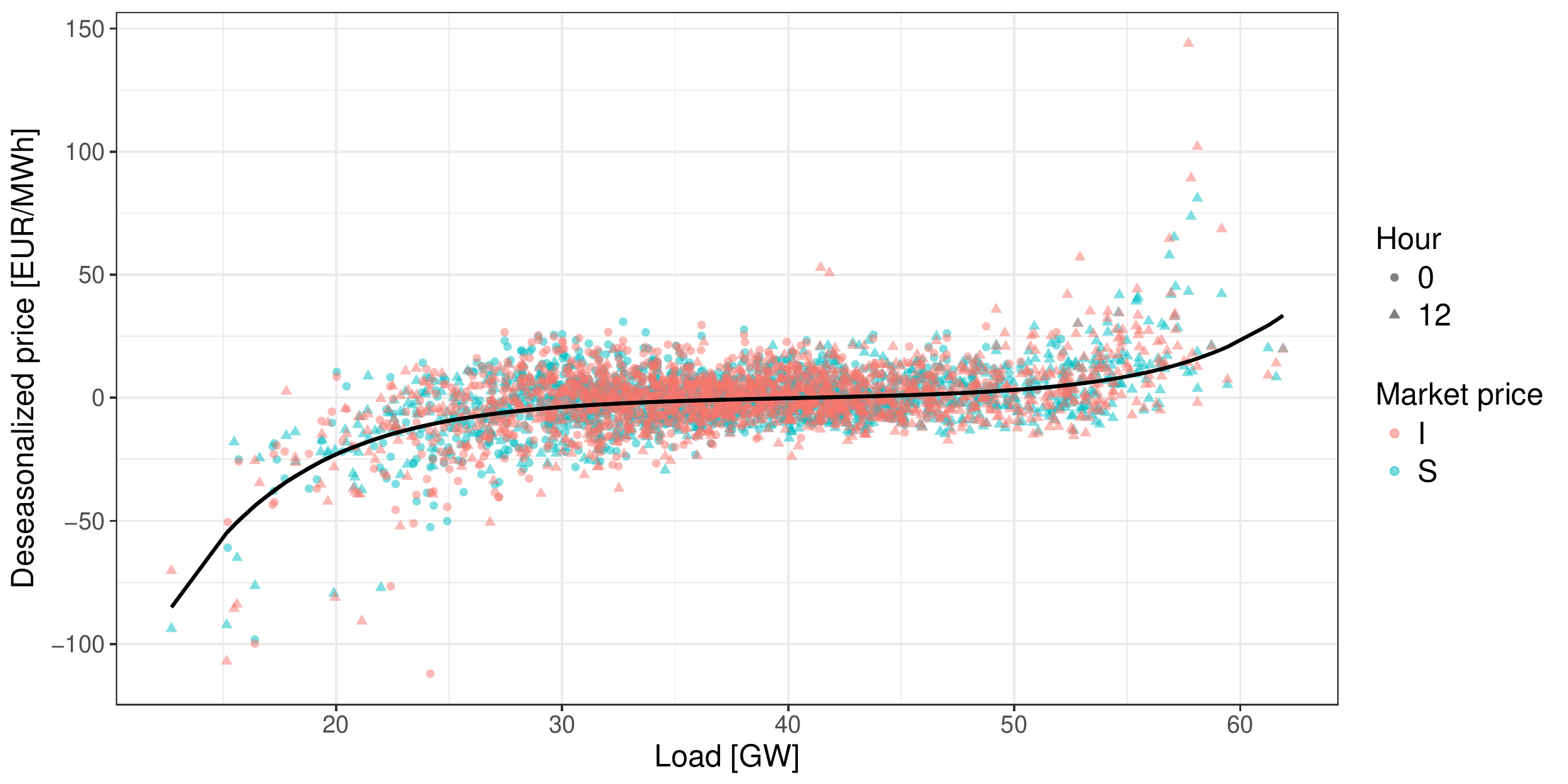}
\caption{Deseasonalized market intraday~$I^M- \gamma_3$ and day-ahead~$S^M- \gamma_3$ spot prices for the delivery hours 0--1 and 12--13 together with the intrinsic price curve $p- \gamma_3$~(black).}
\label{fig:EmpiricalIntrinsicPriceSupplyCurve}
\end{figure}

Analogously to the proof of Lemma~\ref{lemma:EmpiricalIntrinsicElectricityPriceTradablePrice} we can derive an explicit formula for the risk premium:
\[
\pi_t(\tau) = \left[\gamma_1(t; \tau) - \gamma_2(t; \tau)\right]  - \left[\tilde{\gamma}_1(t; \tau)  - \tilde{\gamma}_2(t; \tau) \right]
\]
for all $t \leq \tau_e$, if we define 
\[
\tilde{\gamma}_i(t; \tau) :=  e^{ \alpha_i \left(  g(\tau_e)  + e^{-\lambda \varepsilon} \left(1 - e^{-\lambda\tau}\right) \sigma \theta + e^{-\lambda(\tau_e - t)} \tilde{X}_t + \frac{\alpha_i \sigma^2}{4 \lambda} \left(1 - e^{-2\lambda(\tau_e - t)}\right) - \beta_i \right) },
\]
where $\tilde{X}$ is given in Remark~\ref{remark:EmpiricalSystemLoadUnderP}. Figure~\ref{fig:RiskPremiumPlotsIntrinsicPrice} illustrates the evolution of the risk premium through time. We see that we find an overall negative risk premium for all the plotted contracts, indicating that the \emph{``producers' desire to hedge their positions outweights that of the consumers''} \citep{BenthCartea2008}. In that sense our findings support the results of \citet{BenthCartea2008}, and not those of \citet{Redl2012,Viehmann2011}.

\begin{figure}[p]
\centering
\begin{subfigure}[c]{\textwidth}
\includegraphics[width=\textwidth]{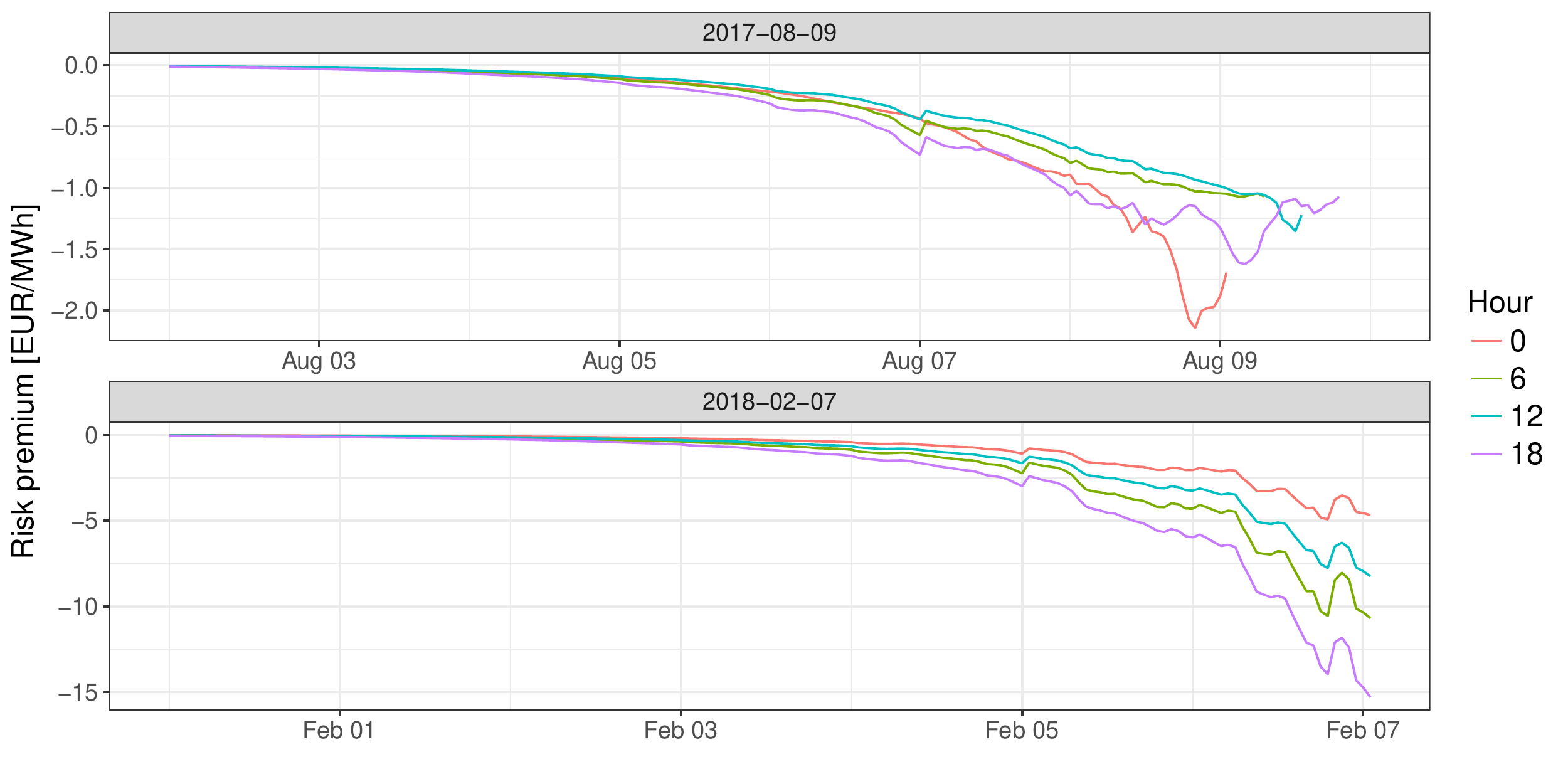}
\subcaption{Risk premia through time for two different delivery dates: the second Wednesday of August 2017 (Summer) and of February 2018 (Winter).}
\label{fig:RiskPremiumPlotsIntrinsicPrice}
\end{subfigure}

\begin{subfigure}[c]{\textwidth}
\includegraphics[width=\textwidth]{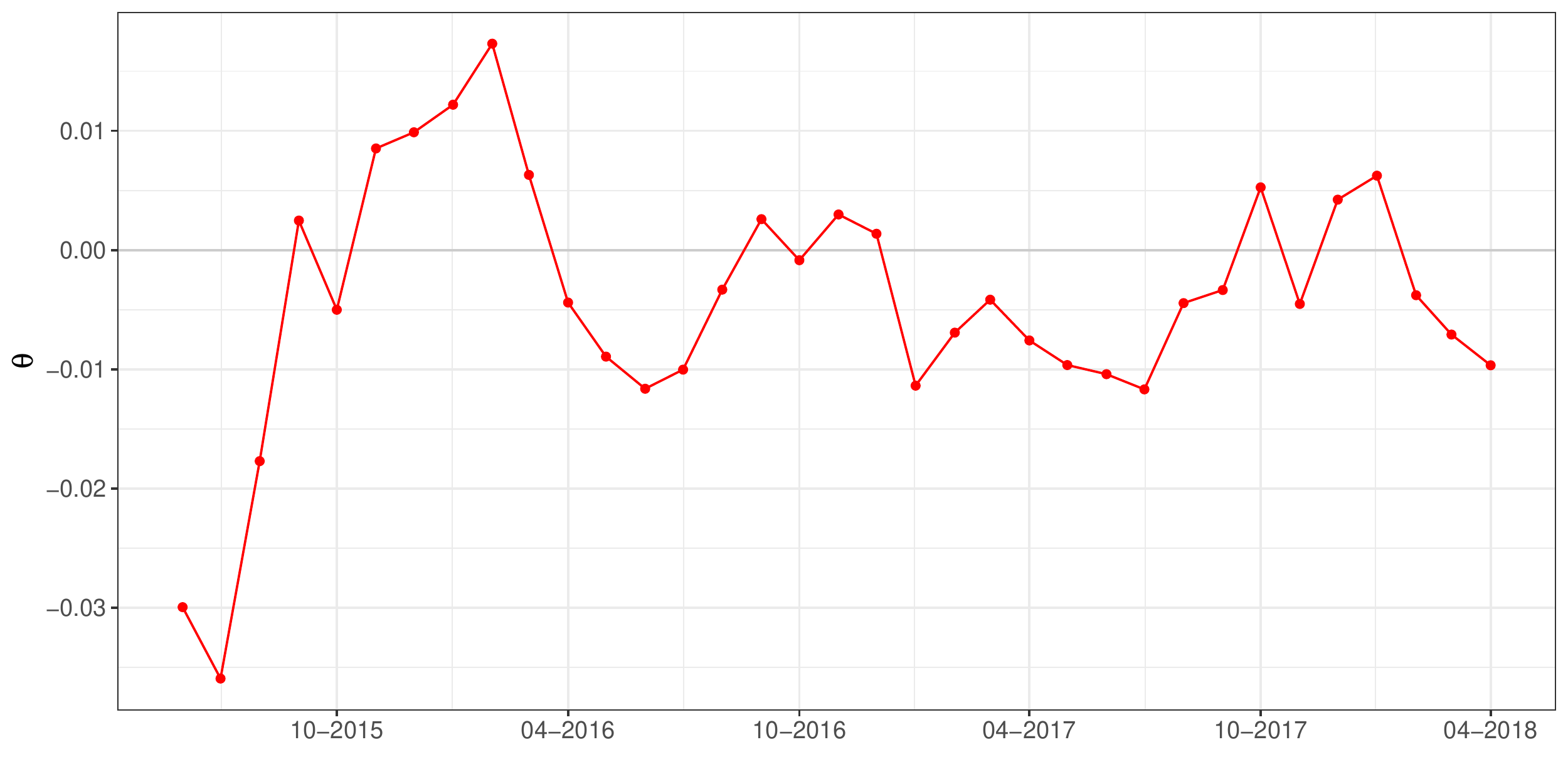}
\subcaption{Monthly implied Girsanov parameter~$\theta$.}
\label{fig:ImpliedTheta}
\end{subfigure}

\caption{Difference between the real-world measure~$P$ and the risk-neutral measure~$Q$.}
\end{figure}

In Figure~\ref{fig:ImpliedTheta} the implied Girsanov parameter~$\theta$ per month is shown. These were computed by solving Equation~\eqref{eq:OptimizationProblemForCalibration} for each month with the parameters~$\alpha_1$, $\alpha_2$, $\beta_1$, and $\beta_2$ fixed at the values we estimated before. The first thing we notice is the change in level from August 2015 to September 2015, where the value jumps from around -0.03 to approximately zero. We see that the implied~$\theta$ changes sign at least twice a year but is negative for most months (in 23 of the 35 months). We see that the positive values all occur during the months September to March. Furthermore, the implied~$\theta$ shows that the assumption of a constant value might have been an oversimplification, which should be investigated in future work.

\section{Conclusion}
In this paper we introduced a new concept for modelling electricity prices. We have discussed how this theory connects the classical theory of storage with the concept of a risk premium through the introduction of an unobservable intrinsic electricity price~$p(\tau)$. Since all tradable electricity contracts are derivatives of the actual intrinsic price, their prices should all be derived under the risk-neutral measure~$Q$. Based on this assumption we derived the prices for all common contracts such as the intraday spot price, the day-ahead spot price, and futures prices. Furthermore, we have shown how this framework relates to existing modelling approaches such as the Heath-Jarrow-Morton modelling approach, e.g. see \citet{Hinz2005,Kiesel2009,Latini2018,Hinderks2018,Benth2019}.

In the final part of this article we estimated a structural model from the difference between the intraday and day-ahead spot prices. By construction of this framework we could directly estimate the measure change between real-world measure~$P$ and the risk-neutral measure~$Q$. With this result we derived and computed the risk premium for several delivery times. We found that the risk premium is negative, indicating that the \emph{``producers' desire to hedge their positions outweights that of the consumers''} \citep{BenthCartea2008}.

For further research it is of interest to investigate the many possibilities for modelling the intrinsic electricity price and develop calibration methods that use all market data, i.e. from intraday, day-ahead spot, and futures markets, in the spirit of \citet{Caldana2017}. Existing models could be fitted to this framework and the results on the measure change could be investigated. In particular, the Girsanov parameter~$\theta$ could be made time-dependent. Finally, the framework as it is presented here is based on a probability space with the natural Brownian filtration. This setting could possibly be extended to a more general setting, in which also jump processes are allowed.

\section*{Acknowledgments}
WJH is grateful for the financial support from Fraunhofer ITWM (\emph{Fraunhofer Institute for Industrial Mathematics}, \url{www.itwm.fraunhofer.de}).

\appendix

\section{Options on futures}
Keeping in mind that the price generating process~$\varphi$ can also be used as one of the modelling ingredients, we can formulate results for the price of European call and put options in for two special cases of the price generating process, which yield normally or lognormally distributed prices.

For deterministic price generating processes we can find:
\begin{proposition}[Normal distribution] \label{proposition:DistributionFuturesPriceUnderQWithDeterministicPhi}
If  $\varphi(\tau)$ is deterministic process for all~$\tau$, then the conditional futures price~$F_t(\mathcal{T}) \, | \, \F_u$ is normally distributed under~$Q$ with mean
\[
\mu_{u} := F_u(\mathcal{T}) =   F_0(\mathcal{T})  +  \int_0^u\varphi_s(\mathcal{T})' \cdot dW_s
\]
and variance
\[
\sigma_{u,t}^2 := \int_u^t \varphi_s(\mathcal{T})' \cdot  \varphi_s(\mathcal{T}) \, ds
\]
for all $u \leq t \leq \tau_1 - \delta$.
\end{proposition}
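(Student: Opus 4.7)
The plan is to leverage the representation
$$F_t(\T) = F_0(\T) + \int_0^t \varphi_s(\T)' \cdot dW_s$$
derived in the paragraph preceding the proposition, together with the hypothesis that $\varphi(\T)$ is deterministic. Determinism of $\varphi(\T)$ follows at once from $\varphi_s(\T) = \tfrac{1}{n} e^{-r(\delta+\varepsilon)} \sum_{i=1}^n \varphi_s(\tau_i)$ and the assumption on each $\varphi(\tau_i)$; this turns the stochastic integral into a Wiener integral of a deterministic integrand, whose distributional properties are classical.

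First I would split the integral at $u$ to obtain
$$F_t(\T) = \Bigl( F_0(\T) + \int_0^u \varphi_s(\T)' \cdot dW_s \Bigr) + \int_u^t \varphi_s(\T)' \cdot dW_s = F_u(\T) + \int_u^t \varphi_s(\T)' \cdot dW_s.$$
The first summand equals $\mu_u$ and is $\F_u$-measurable, so conditioning on $\F_u$ treats it as the constant $\mu_u$. The tail term $\int_u^t \varphi_s(\T)' \cdot dW_s$ is the integral of a deterministic function against Brownian increments occurring strictly after time $u$, so by the standard Wiener-integral theory it is (i) independent of $\F_u$ and (ii) Gaussian under $Q$ with mean zero and variance $\int_u^t \varphi_s(\T)' \cdot \varphi_s(\T) \, ds = \sigma_{u,t}^2$, the latter by the It\^o isometry.

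Combining these two pieces, the conditional distribution of $F_t(\T)$ given $\F_u$ is the law of $\mu_u + Z$ with $Z \sim N(0, \sigma_{u,t}^2)$ independent of $\F_u$, i.e.\ $N(\mu_u, \sigma_{u,t}^2)$ under $Q$, as claimed. There is essentially no obstacle here: the statement is a conditional version of the well-known fact that Wiener integrals of deterministic integrands are Gaussian. The only step requiring a line of care is the joint independence/Gaussianity of the tail integral, which one verifies classically by approximating the deterministic integrand by simple functions and passing to the $L^2$-limit, since Gaussianity is preserved under such limits.
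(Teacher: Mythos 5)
Your proof is correct and follows essentially the same route as the paper: the paper's (very terse) proof likewise rests on the Gaussianity of the Wiener integral of a deterministic integrand, asserted via its characteristic function, with the conditioning on $\F_u$ left as "easily extended"; you simply spell out that extension by splitting the integral at $u$ and invoking independence of the tail plus the It\^o isometry. No gap --- your version is just a more explicit write-up of the same argument.
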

\begin{proof}
For deterministic~$\varphi$ we know through its characteristic function that the integral~$\int_0^t \varphi_s(\mathcal{T})' \cdot dW_s$ is normally distributed with mean 0 and variance~$e^{-2rt} \sigma_{0,t}^2$. This is easily extended to any $u$.
\end{proof}

With the help of this proposition and the following auxilary variable
\[
\Delta_{u,t} := \frac{ F_u(\mathcal{T}) -  K}{ \sigma_{u,t}}
\]
we can compute the price of European put and call options on the futures price~$F_t(\mathcal{T})$.

\begin{lemma}[Call and put options]
If  $\varphi(\tau)$ is deterministic process for all~$\tau$, then for all $u\leq t \leq \tau_1 - \delta$ the price at time~$u$ of a European option with strike~$K$ on the futures contract~$F_t(\mathcal{T})$ is given by
\[
C_u(F_t(\mathcal{T}); K) =  e^{-r(t-u)}  \left(F_u(\mathcal{T}) - K \right) \Phi\left(\Delta_{u,t}\right) + \frac{ e^{-r(t-u)} \sigma_{u,t}}{\sqrt{2\pi}} e^{-\frac{1}{2}\Delta_{u,t}^2}
\]
for a call and by
\[
P_u(F_t(\mathcal{T}); K) =  e^{-r(t-u)}  \left( K - F_u(\mathcal{T}) \right) \Phi\left(-\Delta_{u,t}\right) + \frac{ e^{-r(t-u)} \sigma_{u,t}}{\sqrt{2\pi}} e^{-\frac{1}{2}\Delta_{u,t}^2}
\]
for a put option. Here $\Phi$ is the cumulative distribution function of the standard normal distribution.
\end{lemma}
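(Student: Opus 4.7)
The plan is a direct risk-neutral pricing computation using the Bachelier-type structure that Proposition~\ref{proposition:DistributionFuturesPriceUnderQWithDeterministicPhi} makes available. Since the option payoffs $(F_t(\mathcal{T}) - K)^+$ and $(K - F_t(\mathcal{T}))^+$ are $\mathcal{F}_t$-measurable contingent claims and all tradable assets are discounted $Q$-martingales by construction, I will start from the pricing formulas
\[
C_u(F_t(\mathcal{T}); K) = e^{-r(t-u)} \E_Q\!\left[ (F_t(\mathcal{T}) - K)^+ \,\big|\, \mathcal{F}_u \right], \qquad P_u(F_t(\mathcal{T}); K) = e^{-r(t-u)} \E_Q\!\left[ (K - F_t(\mathcal{T}))^+ \,\big|\, \mathcal{F}_u \right].
\]

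First I would invoke Proposition~\ref{proposition:DistributionFuturesPriceUnderQWithDeterministicPhi} to write, conditional on $\mathcal{F}_u$, $F_t(\mathcal{T}) \stackrel{d}{=} F_u(\mathcal{T}) + \sigma_{u,t} Z$ with $Z \sim \mathcal{N}(0,1)$ under $Q$. Substituting and changing variables, the call expectation reduces to
\[
\E_Q\!\left[ (F_u(\mathcal{T}) - K + \sigma_{u,t} Z)^+ \right] = \int_{-\Delta_{u,t}}^\infty \bigl( F_u(\mathcal{T}) - K + \sigma_{u,t} z \bigr)\, \phi(z)\, dz,
\]
where $\phi$ is the standard normal density and $\Delta_{u,t} = (F_u(\mathcal{T}) - K)/\sigma_{u,t}$ is precisely the auxiliary variable introduced before the lemma statement.

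Next I would split this into two pieces. The first piece $(F_u(\mathcal{T}) - K) \int_{-\Delta_{u,t}}^\infty \phi(z)\,dz$ yields the term $(F_u(\mathcal{T}) - K)\,\Phi(\Delta_{u,t})$ by symmetry of $\phi$. The second piece is handled by the elementary antiderivative $\int z\,\phi(z)\,dz = -\phi(z)$, giving $\sigma_{u,t}\,\phi(-\Delta_{u,t}) = \frac{\sigma_{u,t}}{\sqrt{2\pi}} e^{-\frac{1}{2}\Delta_{u,t}^2}$. Multiplying by the discount factor $e^{-r(t-u)}$ produces the stated call formula.

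For the put I would either repeat the analogous integration on $(-\infty, -\Delta_{u,t}]$, or — arguably cleaner — invoke put-call parity in this linear setting: since $(K - F_t(\mathcal{T}))^+ - (F_t(\mathcal{T}) - K)^+ = K - F_t(\mathcal{T})$ and $\E_Q[F_t(\mathcal{T}) \mid \mathcal{F}_u] = F_u(\mathcal{T})$ by the martingale property of $F_\cdot(\mathcal{T})$, the put price equals $C_u + e^{-r(t-u)}(K - F_u(\mathcal{T}))$. Combining this with the call formula and using $1 - \Phi(\Delta_{u,t}) = \Phi(-\Delta_{u,t})$ reproduces the claimed expression. No real obstacle arises — the only subtlety is bookkeeping of signs in $\Delta_{u,t}$ and recognizing that the Gaussian density term is symmetric, so the same $e^{-\frac{1}{2}\Delta_{u,t}^2}/\sqrt{2\pi}$ appears in both formulas.
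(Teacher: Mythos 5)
Your proposal is correct and follows essentially the same route as the paper: both start from the discounted risk-neutral expectation, invoke Proposition~\ref{proposition:DistributionFuturesPriceUnderQWithDeterministicPhi} to reduce to a Gaussian integral, and evaluate it by the substitution $z=(y-\mu_u)/\sigma_{u,t}$, splitting into the $\Phi(\Delta_{u,t})$ term and the density term via $\int z\,\phi(z)\,dz=-\phi(z)$. Your use of put-call parity with the $Q$-martingale property of $F_\cdot(\mathcal{T})$ for the put is a minor (and clean) variant of the paper's ``analogous'' direct integration, and it reproduces the stated formula.
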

\begin{proof}
Directly computing the conditional expectation yields
\begin{align*}
C_u(F_t(\mathcal{T}); K) &= \E_Q\left[ e^{-r(t-u)} \left(F_t(\mathcal{T})- K\right)^+ \, | \, \F_u\right] \\
&= e^{-r(t-u)}\, \E_Q\left[  \left(Y - K \right)^+ \,\Big| \, \F_u\right],
\end{align*}
where $Y$ is normally distributed with mean~$\mu_{u}$ and variance~$\sigma_{u,t}^2$ as given in Proposition~\ref{proposition:DistributionFuturesPriceUnderQWithDeterministicPhi}. Therefore, we compute
\[
C_u(F_t(\mathcal{T}); K) =  \frac{e^{-r(t-u)}}{\sqrt{2 \pi \sigma_{u,t}^2}}\, \int_K^\infty \left(y-K\right) e^{-\frac{1}{2} \frac{(y - \mu_{u})^2}{\sigma_{u,t}^2}} \, dy,
\]
from which the result follows by substitution of $y' = \frac{y- \mu_{u}}{\sigma_{u,t}}$. The proof follows analogously for put options.
\end{proof}

In contrast to Proposition~\ref{proposition:DistributionFuturesPriceUnderQWithDeterministicPhi} we can derive a lognormal distribution in the following case:
\begin{proposition}[Lognormal distribution]
If the price generating process is of the form
\begin{equation} \label{eq:IntrinsicElectricityPriceLognormalCondition}
\varphi_t(\tau) = \sigma_t \, f_t(\tau)
\end{equation}
for an $\R^d$-valued, deterministic, quadratic integrable process~$\sigma_t$ independent of the delivery time~$\tau$, then forward price is given by
\[
f_t(\tau) = f_0(\tau) \,  e^{- \frac{1}{2} \int_0^t \sigma_s' \cdot \sigma_s \, ds +  \int_0^t \sigma_s' \cdot dW_s},
\]
and, in particular, $f_t(\tau)$ has a lognormal distribution.
\end{proposition}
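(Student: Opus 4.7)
The plan is to reduce the statement to a zero-drift geometric-Brownian-motion-type SDE and then identify its explicit solution as a Doléans-Dade exponential. Recall from the HJM representation following Corollary~\ref{corollary:MartingaleRepresentationTheorem} that, for each fixed delivery time~$\tau$,
\[
f_t(\tau) = f_0(\tau) + \int_0^t \varphi_s(\tau)' \cdot dW_s,
\]
so $f_t(\tau)$ is a $Q$-martingale with differential $df_t(\tau) = \varphi_t(\tau)' \cdot dW_t$. Plugging in the ansatz $\varphi_t(\tau) = \sigma_t\, f_t(\tau)$ turns this into
\[
df_t(\tau) = f_t(\tau)\, \sigma_t' \cdot dW_t, \qquad f_0(\tau) \text{ given.}
\]
This is the stochastic differential equation whose unique strong solution is the stochastic exponential of $\int_0^t \sigma_s' \cdot dW_s$.

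First I would verify the candidate solution
\[
Z_t := f_0(\tau)\, \exp\!\left( \int_0^t \sigma_s' \cdot dW_s - \tfrac{1}{2} \int_0^t \sigma_s' \cdot \sigma_s \, ds \right)
\]
by applying It\^o's formula to $x \mapsto \log x$ (or equivalently to $Z_t$ directly via It\^o on $\exp$), using that $\sigma_s$ is deterministic so the quadratic variation of $\int_0^\cdot \sigma_s' \cdot dW_s$ equals $\int_0^\cdot \sigma_s' \cdot \sigma_s \, ds$. The It\^o correction $-\tfrac{1}{2}\int_0^t \sigma_s'\cdot\sigma_s\, ds$ is precisely what cancels the drift, yielding $dZ_t = Z_t\, \sigma_t' \cdot dW_t$ with $Z_0 = f_0(\tau)$. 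Uniqueness (for instance via Gronwall applied to $f_t(\tau) - Z_t$, exploiting the linear and Lipschitz structure in $f$ together with deterministic coefficients) then gives $f_t(\tau) = Z_t$ a.s.

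For the distributional statement, the exponent $\int_0^t \sigma_s' \cdot dW_s$ is a Wiener integral of a deterministic $\R^d$-valued quadratically integrable function against Brownian motion, hence Gaussian with mean zero and variance $\int_0^t \sigma_s'\cdot\sigma_s\, ds$. Subtracting the deterministic quantity $\tfrac{1}{2}\int_0^t \sigma_s'\cdot\sigma_s\, ds$ only shifts the mean, so the exponent is Gaussian and $f_t(\tau)$ is therefore lognormally distributed (with positive scale factor $f_0(\tau)$, assumed positive as a forward price).

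There is no real obstacle here beyond care with the It\^o calculation; the only minor technical point is ensuring that the candidate $Z_t$ is a genuine solution (not just a local martingale with the right dynamics), which is immediate from the Novikov-type bound $\E_Q[\exp(\tfrac12\int_0^t \sigma_s'\cdot\sigma_s\, ds)]<\infty$ guaranteed by the assumed quadratic integrability of the deterministic~$\sigma$. This also justifies that $Z_t$ is a true $Q$-martingale, consistent with the martingale property of $f_t(\tau)$ established earlier.
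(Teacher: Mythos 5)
Your proposal is correct and follows essentially the same route as the paper: reduce to the driftless linear SDE $df_t(\tau) = f_t(\tau)\,\sigma_t' \cdot dW_t$ (you obtain it from the martingale representation of $f$, the paper equivalently from the SDE for $p_t(\tau)$) and recognize its solution as the stochastic exponential, the paper simply delegating this step to \citet[Chapter 5.6C]{Karatzas1998}. Your added details on uniqueness and on the Gaussianity of the exponent for deterministic, square-integrable $\sigma$ are exactly the content of that citation.
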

\begin{proof}
Follows directly from the SDE in Equation~\eqref{eq:IntrinsicElectricityPriceForwardSDE} and \citet[Chapter 5.6C]{Karatzas1998}.
\end{proof}

From the definition of the futures contract it follows immediately that:
\begin{corollary}
If the price generating process is of the form of Equation~\eqref{eq:IntrinsicElectricityPriceLognormalCondition}, then the futures price is given by
\[
F_t(\mathcal{T}) =  F_0(\mathcal{T}) \, e^{ - \frac{1}{2} \int_0^t \sigma_s' \cdot \sigma_s \, ds +  \int_0^t \sigma_s' \cdot dW_s}
\]
for all $ t \leq \tau_1 - \delta$ and has a lognormal distribution.
\end{corollary}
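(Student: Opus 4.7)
The plan is to simply substitute the explicit form of $f_t(\tau)$ from the preceding proposition into the definition of the futures price and observe that the stochastic exponential factors out.

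First I would use the definition of the futures price. Since $t \leq \tau_1 - \delta \leq \tau_i - \delta$ for all $i$, the stopped times in
\[
F_t(\mathcal{T}) = \frac{e^{-r(\delta+\varepsilon)}}{n} \sum_{i=1}^n f_{t\wedge(\tau_i-\delta)}(\tau_i)
\]
collapse, yielding $F_t(\mathcal{T}) = \frac{e^{-r(\delta+\varepsilon)}}{n} \sum_{i=1}^n f_t(\tau_i)$. At $t = 0$ this also gives the identity $F_0(\mathcal{T}) = \frac{e^{-r(\delta+\varepsilon)}}{n} \sum_{i=1}^n f_0(\tau_i)$, which I will need at the end.

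Next I would plug in the explicit representation of $f_t(\tau)$ from the preceding proposition:
\[
f_t(\tau_i) = f_0(\tau_i) \, e^{-\frac{1}{2}\int_0^t \sigma_s' \cdot \sigma_s \, ds + \int_0^t \sigma_s' \cdot dW_s}.
\]
The key observation is that the exponential factor does \emph{not} depend on $\tau_i$, precisely because the hypothesis of Equation~\eqref{eq:IntrinsicElectricityPriceLognormalCondition} assumes $\sigma_s$ is independent of the delivery time. Therefore this factor can be pulled out of the sum:
\[
F_t(\mathcal{T}) = e^{-\frac{1}{2}\int_0^t \sigma_s' \cdot \sigma_s \, ds + \int_0^t \sigma_s' \cdot dW_s} \cdot \frac{e^{-r(\delta+\varepsilon)}}{n} \sum_{i=1}^n f_0(\tau_i),
\]
and the second factor equals $F_0(\mathcal{T})$ by the $t=0$ identity. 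This gives the claimed representation.

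Finally, for the distributional claim, since $\sigma_s$ is deterministic and square-integrable, $\int_0^t \sigma_s' \cdot dW_s$ is a Wiener integral and hence normally distributed with mean $0$ and variance $\int_0^t \sigma_s' \cdot \sigma_s \, ds$. Exponentiating yields a lognormal random variable, and multiplication by the deterministic constant $F_0(\mathcal{T})$ preserves the lognormal property. There is essentially no obstacle here, as everything is a direct application of the preceding proposition together with the crucial fact that $\sigma_s$ is $\tau$-independent, so the stochastic exponential is common to all summands.
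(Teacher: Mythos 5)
Your proposal is correct and is essentially the paper's argument: the paper simply states that the corollary ``follows immediately from the definition of the futures contract,'' and your write-up fills in exactly those steps (the stopping times collapse for $t \leq \tau_1 - \delta$, the stochastic exponential is common to all summands because $\sigma_s$ is $\tau$-independent, and the remaining sum is $F_0(\mathcal{T})$). The distributional remark via the Wiener integral with deterministic $\sigma_s$ is likewise the intended justification.
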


As for any lognormally distributed asset we can apply the Black-76 formula to derive the price of European call and put options \citep{Black1976}. Therefore, let us define the common auxiliary variables 
\[
d_\pm^{u,t} := \frac{\ln F_u(\mathcal{T}) - \ln K \pm \int_u^t \sigma_s' \cdot \sigma_s \, ds}{\sqrt{ \int_u^t \sigma_s' \cdot \sigma_s \, ds}}
\]
for any $u\leq t$. 

\begin{lemma}[Call and put options]
If the price generating process is of the form of Equation~\eqref{eq:IntrinsicElectricityPriceLognormalCondition}, then for all $u\leq t \leq \tau_1 - \delta$ the price at time~$u$ of a European option with strike~$K$ on the futures contract~$F_t(\mathcal{T})$ is given by
\[
C_u(F_t(\mathcal{T}); K) =    e^{-r(t-u)} \left[ F_u(\mathcal{T}) \, \Phi\left(d_+^{u,t}\right) - K \, \Phi\left(d_-^{u,t}\right)  \right]
\]
for call and by
\[
P_u(F_t(\mathcal{T}); K) = e^{-r(t-u)} \left[K \, \Phi\left(-d_-^{u,t}\right) -  F_u(\mathcal{T}) \, \Phi\left(-d_+^{u,t}\right)  \right]
\]
for put options. Here $\Phi$ is the cumulative distribution function of the standard normal distribution.
\end{lemma}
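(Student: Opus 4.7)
The plan is to reduce this to the standard Black--76 computation by first obtaining the conditional law of $F_t(\mathcal{T})$ given $\F_u$, and then evaluating the expectation of the payoff directly. The preceding Corollary already establishes that $F_t(\mathcal{T}) = F_0(\mathcal{T}) \exp\bigl(-\tfrac{1}{2}\int_0^t \sigma_s' \cdot \sigma_s \, ds + \int_0^t \sigma_s' \cdot dW_s\bigr)$, so first I would simply take the ratio $F_t(\mathcal{T})/F_u(\mathcal{T})$ to write
\[
F_t(\mathcal{T}) = F_u(\mathcal{T}) \, \exp\!\left(-\tfrac{1}{2}\!\int_u^t \sigma_s' \cdot \sigma_s \, ds + \!\int_u^t \sigma_s' \cdot dW_s\right),
\]
and observe that, conditional on $\F_u$, the exponent is normally distributed with mean $-\tfrac{1}{2}\Sigma_{u,t}^2$ and variance $\Sigma_{u,t}^2 := \int_u^t \sigma_s' \cdot \sigma_s \, ds$, since $\sigma$ is deterministic.

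Next I would write the call price as the discounted conditional expectation
\[
C_u(F_t(\mathcal{T}); K) = e^{-r(t-u)}\, \E_Q\!\left[(F_u(\mathcal{T}) e^{-\tfrac{1}{2}\Sigma_{u,t}^2 + \Sigma_{u,t} Z} - K)^+ \,\big|\, \F_u\right],
\]
where $Z$ is a standard normal random variable independent of $\F_u$. The positive part is non-zero precisely when $Z > -d_-^{u,t}$, which I would verify by solving $F_u(\mathcal{T}) e^{-\tfrac{1}{2}\Sigma_{u,t}^2 + \Sigma_{u,t} z} = K$ for $z$ and matching it against the definition of $d_-^{u,t}$. The expectation then splits into two integrals: the $K$-part immediately yields $K\,\Phi(d_-^{u,t})$ by symmetry of the standard normal, while the $F_u$-part requires completing the square in the exponent, i.e.\ using $-\tfrac{1}{2}\Sigma_{u,t}^2 + \Sigma_{u,t}z - \tfrac{1}{2}z^2 = -\tfrac{1}{2}(z - \Sigma_{u,t})^2$, followed by the substitution $z' = z - \Sigma_{u,t}$. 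This shifts the lower limit of integration from $-d_-^{u,t}$ to $-d_-^{u,t} - \Sigma_{u,t} = -d_+^{u,t}$, giving the $F_u(\mathcal{T})\,\Phi(d_+^{u,t})$ term.

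For the put, the cleanest route is to repeat the same computation with the payoff $(K - F_t(\mathcal{T}))^+$, which is non-zero on $\{Z < -d_-^{u,t}\}$; alternatively one can invoke put--call parity, noting that $\E_Q[F_t(\mathcal{T}) \,|\, \F_u] = F_u(\mathcal{T})$ by the martingale property established in the subsection on futures, so that $C_u - P_u = e^{-r(t-u)}(F_u(\mathcal{T}) - K)$, which combined with $\Phi(x) + \Phi(-x) = 1$ delivers the stated put formula. The main step requiring care is the bookkeeping of the $d_+^{u,t}$ versus $d_-^{u,t}$ shift induced by completing the square, together with the sign of $\tfrac{1}{2}\Sigma_{u,t}^2$ in the exponent; beyond that the proof is a routine Gaussian integration with no substantive obstacle.
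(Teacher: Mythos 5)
Your proof is correct and is essentially the argument the paper intends: the paper gives no written proof for this lemma but simply invokes the Black--76 formula for lognormal underlyings, and your computation (conditional lognormal law of $F_t(\mathcal{T})$ given $\F_u$, exercise region $\{Z > -d_-^{u,t}\}$, completing the square to shift $-d_-^{u,t}$ to $-d_+^{u,t}$, put via parity using the $Q$-martingale property) is exactly the standard derivation behind that citation. One caveat: your identifications ``the threshold equals $-d_-^{u,t}$'' and ``$-d_-^{u,t} - \Sigma_{u,t} = -d_+^{u,t}$'' hold for the standard Black--76 quantities $d_\pm^{u,t} = \bigl(\ln F_u(\mathcal{T}) - \ln K \pm \tfrac{1}{2}\int_u^t \sigma_s' \cdot \sigma_s \, ds\bigr)/\sqrt{\int_u^t \sigma_s' \cdot \sigma_s \, ds}$, whereas the paper's displayed definition of $d_\pm^{u,t}$ omits the factor $\tfrac{1}{2}$ in front of the integrated variance; against that literal definition your matching step would fail, so you are in effect (correctly) proving the lemma with the standard definition, which is evidently what the paper means.
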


\bibliographystyle{abbrvnat}
\bibliography{references}

\begin{thebibliography}{24}
\providecommand{\natexlab}[1]{#1}
\providecommand{\url}[1]{\texttt{#1}}
\expandafter\ifx\csname urlstyle\endcsname\relax
  \providecommand{\doi}[1]{doi: #1}\else
  \providecommand{\doi}{doi: \begingroup \urlstyle{rm}\Url}\fi

\bibitem[A{\"i}d et~al.(2009)A{\"i}d, Campi, Huu, and Touzi]{Aid2009}
R.~A{\"i}d, L.~Campi, A.~N. Huu, and N.~Touzi.
\newblock A structural risk-neutral model of electricity prices.
\newblock \emph{International Journal of Theoretical and Applied Finance},
  12\penalty0 (07):\penalty0 925--947, Nov. 2009.
\newblock \doi{10.1142/s021902490900552x}.

\bibitem[Barlow(2002)]{Barlow2002}
M.~T. Barlow.
\newblock A diffusion model for electricity prices.
\newblock \emph{Mathematical Finance}, 12\penalty0 (4):\penalty0 287--298, Oct.
  2002.
\newblock \doi{10.1111/j.1467-9965.2002.tb00125.x}.

\bibitem[Benth and Meyer-Brandis(2009)]{Benth2009}
F.~E. Benth and T.~Meyer-Brandis.
\newblock The information premium for non-storable commodities.
\newblock \emph{The Journal of Energy Markets}, 2\penalty0 (3):\penalty0
  111--140, Sept. 2009.
\newblock \doi{10.21314/jem.2009.021}.

\bibitem[Benth et~al.(2008{\natexlab{a}})Benth, Benth, and
  Koekebakker]{Benth2008}
F.~E. Benth, J.~{\v{S}}. Benth, and S.~Koekebakker.
\newblock \emph{Stochastic Modelling of Electricity and Related Markets}.
\newblock World Scientific, Apr. 2008{\natexlab{a}}.
\newblock \doi{10.1142/6811}.

\bibitem[Benth et~al.(2008{\natexlab{b}})Benth, Cartea, and
  Kiesel]{BenthCartea2008}
F.~E. Benth, {\'{A}}.~Cartea, and R.~Kiesel.
\newblock Pricing forward contracts in power markets by the certainty
  equivalence principle: Explaining the sign of the market risk premium.
\newblock \emph{Journal of Banking {\&} Finance}, 32\penalty0 (10):\penalty0
  2006--2021, Oct. 2008{\natexlab{b}}.
\newblock \doi{10.1016/j.jbankfin.2007.12.022}.

\bibitem[Benth et~al.(2019)Benth, Piccirilli, and Vargiolu]{Benth2019}
F.~E. Benth, M.~Piccirilli, and T.~Vargiolu.
\newblock Mean-reverting additive energy forward curves in a
  heath{\textendash}jarrow{\textendash}morton framework.
\newblock \emph{Mathematics and Financial Economics}, Feb. 2019.
\newblock \doi{10.1007/s11579-019-00237-x}.

\bibitem[Black(1976)]{Black1976}
F.~Black.
\newblock The pricing of commodity contracts.
\newblock \emph{Journal of Financial Economics}, 3\penalty0 (1-2):\penalty0
  167--179, Jan. 1976.
\newblock \doi{10.1016/0304-405x(76)90024-6}.

\bibitem[Caldana et~al.(2017)Caldana, Fusai, and Roncoroni]{Caldana2017}
R.~Caldana, G.~Fusai, and A.~Roncoroni.
\newblock Electricity forward curves with thin granularity: Theory and
  empirical evidence in the hourly {EPEXspot} market.
\newblock \emph{European Journal of Operational Research}, 261\penalty0
  (2):\penalty0 715--734, Sept. 2017.
\newblock \doi{10.1016/j.ejor.2017.02.016}.

\bibitem[Fama and French(1987)]{Fama1987}
E.~F. Fama and K.~R. French.
\newblock Commodity futures prices: Some evidence on forecast power, premiums,
  and the theory of storage.
\newblock \emph{The Journal of Business}, 60\penalty0 (1):\penalty0 55--73,
  Jan. 1987.
\newblock \doi{10.1086/296385}.

\bibitem[Heath et~al.(1992)Heath, Jarrow, and Morton]{Heath1992}
D.~Heath, R.~Jarrow, and A.~Morton.
\newblock Bond pricing and the term structure of interest rates: A new
  methodology for contingent claims valuation.
\newblock \emph{Econometrica}, 60\penalty0 (1):\penalty0 77--105, Jan. 1992.
\newblock \doi{10.2307/2951677}.

\bibitem[Hinderks et~al.(2019)Hinderks, Korn, and Wagner]{Hinderks2018}
W.~J. Hinderks, R.~Korn, and A.~Wagner.
\newblock A structural heath{\textendash}jarrow{\textendash}morton framework
  for consistent intraday spot and futures electricity prices.
\newblock \emph{Quantitative Finance}, 20\penalty0 (3):\penalty0 347--357, dec
  2019.
\newblock \doi{10.1080/14697688.2019.1687927}.

\bibitem[Hinz et~al.(2005)Hinz, von Grafenstein, Verschuere, and
  Wilhelm]{Hinz2005}
J.~Hinz, L.~von Grafenstein, M.~Verschuere, and M.~Wilhelm.
\newblock Pricing electricity risk by interest rate methods.
\newblock \emph{Quantitative Finance}, 5\penalty0 (1):\penalty0 49--60, Feb.
  2005.
\newblock \doi{10.1080/14697680500040876}.

\bibitem[Hull(2000)]{Hull2000}
J.~C. Hull.
\newblock \emph{Options, Futures, and other Derivatives}.
\newblock Pearson Education, Inc., 8 edition, 2000.
\newblock ISBN 978-0-13-216494-8.

\bibitem[Karatzas and Shreve(1998)]{Karatzas1998}
I.~Karatzas and S.~E. Shreve.
\newblock \emph{Brownian Motion and Stochastic Calculus}.
\newblock Springer New York, New York, 2 edition, 1998.
\newblock \doi{10.1007/978-1-4612-0949-2}.

\bibitem[Kiesel et~al.(2009)Kiesel, Schindlmayr, and B\"orger]{Kiesel2009}
R.~Kiesel, G.~Schindlmayr, and R.~H. B\"orger.
\newblock A two-factor model for the electricity forward market.
\newblock \emph{Quantitative Finance}, 9\penalty0 (3):\penalty0 279--287, Apr.
  2009.
\newblock \doi{10.1080/14697680802126530}.

\bibitem[Korn and Korn(2001)]{Korn2001}
R.~Korn and E.~Korn.
\newblock \emph{Option Pricing and Portfolio Optimization: Modern Methods of
  Financial Mathematics}, volume~31 of \emph{Graduate Studies in Mathematics}.
\newblock American Mathematical Society, 2001.
\newblock ISBN 978-0-8218-2123-7.

\bibitem[Latini et~al.(2018)Latini, Piccirilli, and Vargiolu]{Latini2018}
L.~Latini, M.~Piccirilli, and T.~Vargiolu.
\newblock Mean-reverting no-arbitrage additive models for forward curves in
  energy markets.
\newblock \emph{Energy Economics}, Mar. 2018.
\newblock \doi{10.1016/j.eneco.2018.03.001}.
\newblock In press.

\bibitem[Lucia and Torr{\'{o}}(2011)]{Lucia2011}
J.~J. Lucia and H.~Torr{\'{o}}.
\newblock On the risk premium in nordic electricity futures prices.
\newblock \emph{International Review of Economics {\&} Finance}, 20\penalty0
  (4):\penalty0 750--763, Oct. 2011.
\newblock \doi{10.1016/j.iref.2011.02.005}.

\bibitem[Lyle and Elliott(2009)]{Lyle2009}
M.~R. Lyle and R.~J. Elliott.
\newblock A `simple' hybrid model for power derivatives.
\newblock \emph{Energy Economics}, 31\penalty0 (5):\penalty0 757--767, Sept.
  2009.
\newblock \doi{10.1016/j.eneco.2009.05.007}.

\bibitem[Protter(2005)]{Protter2005}
P.~E. Protter.
\newblock \emph{Stochastic Integration and Differential Equations}.
\newblock Springer Berlin Heidelberg, Heidelberg, 2 edition, 2005.
\newblock \doi{10.1007/978-3-662-10061-5}.

\bibitem[Redl and Bunn(2012)]{Redl2012}
C.~Redl and D.~W. Bunn.
\newblock Determinants of the premium in forward contracts.
\newblock \emph{Journal of Regulatory Economics}, 43\penalty0 (1):\penalty0
  90--111, Oct. 2012.
\newblock \doi{10.1007/s11149-012-9202-7}.

\bibitem[Steinert and Ziel(2018)]{Steinert2018}
R.~Steinert and F.~Ziel.
\newblock Short- to mid-term day-ahead electricity price forecasting using
  futures.
\newblock \emph{The Energy Journal}, 2018.
\newblock URL \url{https://arxiv.org/abs/1801.10583}.
\newblock In press.

\bibitem[Viehmann(2011)]{Viehmann2011}
J.~Viehmann.
\newblock Risk premiums in the german day-ahead electricity market.
\newblock \emph{Energy Policy}, 39\penalty0 (1):\penalty0 386--394, Jan. 2011.
\newblock \doi{10.1016/j.enpol.2010.10.016}.

\bibitem[Wagner(2014)]{Wagner2014}
A.~Wagner.
\newblock Residual demand modeling and application to electricity pricing.
\newblock \emph{The Energy Journal}, 35\penalty0 (2), 2014.
\newblock URL \url{https://EconPapers.repec.org/RePEc:aen:journl:ej35-2-03}.

\end{thebibliography}
\end{document}